\newtheorem{lemma}{Lemma}
\newtheorem{corollary}{Corollary}
\newtheorem{definition}{Definition}
\newtheorem{theorem}{Theorem}
\newtheorem{example}{Example}
\title{Maximum Rate of 3- and 4-Real-Symbol ML Decodable Unitary Weight STBCs}
\author{
\authorblockN{Teja~Damodaram~Bavirisetti and B.~Sundar Rajan\\}
\IEEEauthorblockA{\small{Dept. of ECE, IISc, Bangalore 560012, India, Email: \{tdamodar,bsrajan\}@ece.iisc.ernet.in}}
}
\date{\today}
\begin{document}
\maketitle
\thispagestyle{empty}	

\begin{abstract}
It has been shown recently that the maximum rate of a 2-real-symbol (single-complex-symbol) maximum likelihood (ML) decodable, square space-time block codes (STBCs) with unitary weight matrices is $\frac{2a}{2^a}$ complex symbols per channel use (cspcu) for $2^a$ number of transmit antennas \cite{KSR}. These STBCs are obtained from Unitary Weight Designs (UWDs). In this paper, we show that the maximum rates for 3- and 4-real-symbol (2-complex-symbol) ML decodable square STBCs from UWDs, for $2^{{a}}$  transmit antennas, are $\frac{3(a-1)}{2^{a}}$ and $\frac{4(a-1)}{2^{a}}$ cspcu, respectively. STBCs achieving this maximum rate are constructed. A set of sufficient conditions on the signal set, required  for these codes to achieve full-diversity are derived along with expressions for their coding gain.
\end{abstract}

\section{Introduction}
Consider an $N$ transmit antenna, $N_{r}$ receive antenna quasi-static
Rayleigh flat fading MIMO channel given by
\begin{equation}\label{channel}
    Y = XH +W
\end{equation}
where $H \in\mathbb{C}^{N \times N_{r}}$ is the channel matrix with the entries assumed to be i.i.d., circularly symmetric Gaussian random variables $\sim \mathcal{N}_\mathbb{C}\left(0,1\right)$, $X \in \mathbb{C}^{T \times N}$ is the matrix of transmitted signal, $W \in \mathbb{C}^{T \times N_{r}}$ is a complex white Gaussian noise matrix with i.i.d., entries $\sim
\mathcal{N}_{\mathbb{C}}\left(0,N_{0}\right)$ and $Y\in  \mathbb{C}^{T \times N_{r}}$ is the  matrix of received signal ($\mathbb{C}$ is the field of complex numbers). Throughout this paper, we assume $T = N.$
\begin{definition}[\textbf{LSTD}{\cite{RaR}}]
 An $N \times N$ Linear Space-Time Design
(LSTD)  or simply a design $X$ in $K$ real variables
$x_{1},$ $\ldots,$ $x_{K}$ is a matrix $    \sum_{i=1}^{K}x_{i}A_{i}$,
where $A_{i} \in \mathbb{C}^{N\times N}$, $i = 1,\ldots,K$ and $A_{1}, \ldots ,A_{K}$ are linearly independent over the field of real numbers $\mathbb{R}.$ The matrices $A_{i}$ are known as the weight matrices.
\end{definition}
\begin{definition}[\textbf{Rate}]
 The rate of an $N \times N$ design $X$ in $K$ real variables is $R = \frac{K}{2N}$ complex symbols per channel use (cspcu).
\end{definition}
\begin{definition}[\textbf{STBC}]
An $N \times N$ Space-Time Block Code (STBC) $\cal{C}$ is a finite
subset of $\mathbb{C}^{N \times N}$.
\end{definition}
An STBC can be obtained from a design $X$ by letting the vector  $(x_{1},$ $\ldots,$ $x_{K})$ take values from a finite set $ \cal{A} \subset \mathbb{R}^{\textit{K}}.$
The set $\cal{A}$ is called the signal set. Denote the STBC obtained
this way by ${\cal{C}}(X,{\cal{A}})$. Let ${\bf s}=[x_{1}~x_{2}~ \ldots ~x_{K}]^{T}$ and $S({\bf s})=\sum_{i=1}^{K}x_{i}A_{i}$. Then, we have
\begin{equation}\label{code}
{\cal{C}}({X},{\cal{A}})=\{S({\bf s})|{\bf s}\in \cal{A} \}.
\end{equation}
An STBC $\cal{C}$, whose encoding symbols ($x_{1},\cdots,x_{K}$) are chosen from a set $\cal{A}$ is said to offer \textit{full-diversity} iff for every possible codeword pair ($S,\hat{{S}}$) ($S,\hat{{S}}\in \cal{C}$), with ${{S}}\not=\hat{{S}}$, the codeword difference matrix ${{S}}-\hat{{S}}$ is full-ranked \cite{TSC}. In general, the diversity offered by a code, depends on the constellation it employs. A code can offer full-diversity for certain signal set $\cal{A}$ but not for another signal set. The CODs are special in this aspect since they offer full-diversity for any arbitrary signal set. The coding gain $\delta$ of an STBC $\cal{C}$ is defined as
\begin{equation}
\nonumber\delta={\min}_{S-\hat{S},S\not=\hat{S}}\left({\prod}_{i=1}^{r}{\lambda}_{i}\right)^{\frac{1}{r}},
\end{equation}
where ${\lambda}_{i},~i=1,2,\cdots,r$ are the non-zero eigenvalues of the matrix $\left(S-\hat{S}\right)^{H} \left(S-\hat{S}\right)$ and $r$ is the minimum of the rank of $\left({S}-\hat{{S}}\right)^{H} \left({S}-\hat{{S}}\right)$ for all possible codeword pairs
($S,\hat{{S}}$) ($S,\hat{{S}}\in \cal{C}$), with ${S}\not=\hat{{S}}$ \cite{TSC}.
\subsection{Encoding complexity and group ML Decoding}
One of the important aspects in the design of STBCs is their ML decoding complexity. This depends on their encoding complexity \cite{RaR}. If we use (\ref{code}) for encoding an STBC from a LSTD, we see that, in general, one needs to choose an element from $\cal{A}$ and then substitute for the real variables $x_{1},$ $\ldots,$ $x_{K}$ in the LSTD. This method of encoding clearly requires a look-up table (memory) with $|\cal{A}|$ entries. However, if the signal set $\cal{A}$ is a Cartesian product of $g$ smaller signal sets in $\frac{K}{g}$ real variables, then the encoding complexity can be reduced (to memory with $g|{\cal{A}}|^{\frac{1}{g}}$ entries). Moreover, if $\cal{A}=\cal{A_{\text{1}}}$$\times\cal{A}_{\text{2}}$$\times\cdots\times$${\cal{A}}_{g}$ where each $\cal{A}_{\text{i}} \subset $ ${\mathbb{R}}^{\frac{\text{K}}{g}}$ with cardinality ${|\cal{A}|}^{\frac{\text{1}}{g}}$, then the STBC $\cal{C}$ itself decomposes into a sum of $g$ different STBCs as follows.

Let $K=g\lambda$. Then, by appropriately reordering/relabeling the real variables, we can assume without loss of generality\footnote{Here we have assumed that the first $\lambda$ variables belong to first group and second $\lambda$ variables belong to second group and last $\lambda$ variables belong to the $g-$th group. In general, the partitioning of real variables into $g$-groups can be arbitrary.} that $S({\bf s})=\sum_{i=1}^{K}x_{i}A_{i}=S_{1}({\bf s}_{1})+S_{2}({\bf s}_{2})+\cdots+S_{g}({\bf s}_{g})$, where $S_{i}({\bf s}_{i}) =\sum_{j=(i-1)\lambda+1}^{i\lambda} x_{j}A_{j}$ and ${\bf s}_{i} = \lbrack x_{(i-1)\lambda+1}~ x_{(i-1)\lambda+2}~ \cdots ~x_{i\lambda} \rbrack^{T},$ for $i=1,2,\cdots,g.$ Hence, the STBC decomposes as ${\cal C} = \sum_{i=1}^g {\cal C}_i,$ where
\begin{center}
  ${\cal{C}}_1 = \{S_{1}({\bf s}_{1}) \mid {\bf s}_{1} \in {\cal{A}}_{1}\}$

  ${\cal{C}}_2 = \{S_{2}({\bf s}_{2}) \mid {\bf s}_{2} \in {\cal{A}}_{2}\}$

 \vspace{-2 mm}
$\vdots$

    ${\cal{C}}_g = \{S_{g}({\bf s}_{g}) \mid {\bf s}_{g} \in {\cal{A}}_{g}\}$.
\end{center}
For the given channel (\ref{channel}) the ML decoder is given by
\begin{equation*}
    \hat{X} = \arg  \min_{X \in \cal{C} } ||Y-XH||_{F}^{2}.
\end{equation*}
For a $g$-group encodable STBC $\cal{C}$, X=$\sum_{i=1}^{g}$X$_{i}$ for some X$_{i}\in \cal{C}_{\textit{i}}$. Let, $\text{C}_{i} = \{ A_{(i-1)\lambda+1}, ~A_{(i-1)\lambda+2}, ~\cdots, ~A_{i\lambda} \}$, where, C$_{i}$ is the set of weight matrices corresponding to STBC ${\cal{C}}_{i}$. It is shown in  \cite{KhR}-\cite{KaR2} that, the ML decoder decomposes as
\begin{equation*}
    \hat{X} = \sum_{i=1}^{g} \arg  \min_{X_{i} \in {\cal{C}}_{i} } ||Y-X_{i}H||_{F}^{2},
\end{equation*}
if the weight matrices $A_{i},i=1,\ldots,K$ satisfy the conditions
\begin{equation}\label{matrix_condn}
    {A}_{k}^{H}{A}_{l} + {A}_{l}^{H}{A}_{k} = 0 \hspace{6mm}     \forall {A}_{\textit{k}} \in {\text{C}}_{\textit{k}}, A_{\textit{l}} \in \text{C}_{\textit{l}} ,\hspace{6 mm} \textit{k}\not=\textit{l}.
\end{equation}
In other words, the component STBCs $\cal{C}_{\text{i}}$'s can then be decoded independently.
\begin{definition}[\cite{RaR2}]
A STBC ${\cal C} = \{S(s)|\text{s} \in \cal{A} \subset \mathbb{R}^{\text{K}} \}$ is said to be $g$-group decodable or $\frac{K}{g}$ real symbol decodable (or $\frac{K}{2g}$ complex symbol ML decodable) if $\cal{C}$ is $g$-group encodable and if the associated weight matrices satisfy (\ref{matrix_condn}).
\end{definition}
\subsection{Contributions}
In \cite{KSR}, an achievable upper bound on the rate of unitary-weight single-complex-symbol-decodable (SSD) code is derived to be $\frac{2a}{2^a}$ cspcu for $2^{a}$ antennas. The maximum rate of 3- and 4-real symbol ML decodable $2^{a}\times2^{a}$ ($a\ge2$) Unitary Weight Designs (UWDs) (LSTDs with unitary weight matrices) has not been reported so far in the literature, to the best of our knowledge.

The contributions  of this paper are as follows:

\begin{itemize}
\item We show that the maximum rate of 3- and 4-real symbol ML decodable $2^{a}\times2^{a}$ ($a\ge2$) UWDs are $\frac{3(a-1)}{2^a}$ cspcu and $\frac{4(a-1)}{2^a}$ cspcu, respectively. (Section \ref{sec3})
\item Codes which achieve this maximum rate are presented (Explicit construction in the proof of Theorem \ref{thmcode}).
\item For our explicitly constructed codes, signal sets achieving full-diversity have been identified  along with expressions for their coding gain (Section \ref{sec4}).
\end{itemize}
{\textit{\textbf Organization:}}  In Section \ref{sec2}, we define 3- and 4-real symbol decodable unitary weight STBCs and explain the notion of normalization and its use in our analysis.  In Section \ref{sec3}, we present the main result of this paper, a tight upper bound on the rates of 3- and 4-real symbol decodable $2^{a}\times2^{a}$ UWDs. In Section \ref{sec4}, signal sets achieving full-diversity have been identified for the STBCs given in Section \ref{sec3}. Concluding remarks and scope for further work constitute Section \ref{sec5}.

\textit{\textbf{Notations:}}
$\mathbb{R}$ and $\mathbb{C}$ denote the field of real and complex numbers respectively. The set of purely imaginary numbers is represented by $img(\mathbb{C})$. $GL(n,\mathbb{C})$ denotes the group of invertible matrices of size $n\times n$ with complex entries. For any complex matrix $A$, $A^{T}$ and $A^{H}$ represent the Transpose and Hermitian of $A$ respectively. $I_{n}$ and $0_{n}$ represent the $n \times n$ identity matrix and the zero matrix, respectively. For a set $S$, $|S|$ denotes the cardinality of $S$. The Frobenius norm is denoted by $\Vert.\Vert_F$.  For sets $A_{1}$ and $A_{2}$, the Cartesian product of $A_{1}$ and $A_{2}$ is denoted by $A_{1}\times A_{2}$. For a complex number $Z$, complex conjugate is $Z^{*}$. Also, $j$ represents $\sqrt{-1}$ unless it is used as a subscript or index of some quantity or as a running variable. Bold face small letters (ex: ${\bf a}$) represent vectors.
\section{Representation of $\lambda-$real symbol decodable unitary weight STBCs} \label{sec2}
 In this section, we give a representation of $\lambda-$real symbol or  $g$-group decodable STBCs. Any $n \times n$ codeword matrix $S$ of a linear STBC $\cal{S}$ with $g$ groups is represented as
 \begin{equation*}
    {S}=\sum_{i=0}^{g-1}\sum_{j=1}^{\lambda}x_{ij}{A}_{i{j}}
  \end{equation*}
     for $\lambda$-real symbol decodable STBCs, where $\lambda=\frac{K}{g}$. We consider $\lambda=3$ and 4. All the $K$ matrices ($A_{ij},~0\leq i \leq g-1,~1\leq j \leq \lambda$) have to be linearly independent over $\mathbb{R}$.

For a $g$-group decodable STBC $\cal{S}$, a set of necessary and sufficient conditions on the weight matrices are (from (\ref{matrix_condn})),
\begin{equation}\label{nasc}
A_{i{j_1}}^{H}A_{k{j_2}}+A_{k{j_2}}^{H}A_{i{j_1}}=0,
\end{equation}
for $0\leq i\not=k \leq g-1$ and $1\leq  j_1,j_2 \leq \lambda$.
UWDs also satisfy the following criteria
\begin{equation}\label{nasc1}
{A}_{ij}^{H}{A}_{ij}=I_{n}~\text{for}~0\leq i \leq g-1 ~\text{and}~ 1\leq j \leq \lambda.
\end{equation}
\begin{lemma}[\cite{KSR}]{\label{normal}}
Let $\cal{S}$ be a unitary-weight STBC (i.e. obtained from a UWD) and consider the STBC ${\cal{S}}_{U}\triangleq \{U{S}|{S} \in {\cal{S}} \}$, where $U$ is any unitary matrix. Then if $\cal{S}$ satisfies conditions (\ref{nasc}) and (\ref{nasc1}), then so does ${\cal{S}}_{U}$. Further, both the codes have the same coding gain for any signal set $\cal{A}$.
\end{lemma}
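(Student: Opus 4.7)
The plan is to observe that left-multiplying every codeword by a fixed unitary matrix $U$ acts on the design by replacing each weight matrix $A_{ij}$ with $UA_{ij}$, after which all three claims (the two structural conditions and the coding gain equality) follow immediately from $U^H U = I_n$. I would open the proof by writing $S = \sum_{i,j} x_{ij} A_{ij}$ so that $US = \sum_{i,j} x_{ij}(UA_{ij})$, and noting that the matrices $UA_{ij}$ are linearly independent over $\mathbb{R}$ (since $U$ is invertible), so they are the weight matrices of the LSTD whose STBC is ${\cal S}_U$.

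Next I would verify condition (\ref{nasc1}) for ${\cal S}_U$ by the one-line computation
\begin{equation*}
(UA_{ij})^H (UA_{ij}) \;=\; A_{ij}^H \,U^H U\, A_{ij} \;=\; A_{ij}^H A_{ij} \;=\; I_n.
\end{equation*}
Then I would verify (\ref{nasc}) by the analogous computation
\begin{equation*}
(UA_{ij_1})^H (UA_{kj_2}) + (UA_{kj_2})^H (UA_{ij_1}) \;=\; A_{ij_1}^H A_{kj_2} + A_{kj_2}^H A_{ij_1} \;=\; 0,
\end{equation*}
where the last equality uses the hypothesis that $\cal S$ satisfies (\ref{nasc}).

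For the coding gain statement, I would note that for any pair of distinct codewords $S,\hat S \in \cal S$ the corresponding pair in ${\cal S}_U$ differs by $U(S-\hat S)$, and
\begin{equation*}
\bigl(U(S-\hat S)\bigr)^H \bigl(U(S-\hat S)\bigr) \;=\; (S-\hat S)^H U^H U (S-\hat S) \;=\; (S-\hat S)^H (S-\hat S).
\end{equation*}
Since the Gram matrices coincide, their ranks and non-zero eigenvalues coincide pairwise across the two codes, so the coding gain $\delta$ defined in the introduction takes the same value for $\cal S$ and for ${\cal S}_U$ under any signal set $\cal A$.

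There is really no difficult step here: the whole proof is one-line algebra repeated three times, all powered by $U^H U = I_n$. The only point worth a brief comment is that the map $S \mapsto US$ is a bijection between $\cal S$ and ${\cal S}_U$ (since $U$ is invertible), which guarantees that taking the minimum over codeword pairs gives the same value on both sides in the coding gain definition.
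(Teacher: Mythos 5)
Your proof is correct; the paper itself states this lemma as a citation to \cite{KSR} without reproducing a proof, and your argument (cancelling $U^HU=I_n$ in each of the three verifications, plus the bijection remark for the minimum in the coding gain) is exactly the standard computation the cited reference uses. Nothing is missing.
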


The STBCs $\cal{S}$ and ${\cal{S}}_{U}$ are said to be \textit{equivalent}. To simplify our analysis of unitary weight STBCs, we make use of normalization as described below. Let $\cal{S}$ be a unitary weight STBC and let its codeword matrix $S$ be expressed as
 \begin{equation*}
    {S}=\sum_{i=0}^{g-1}\sum_{j=1}^{\lambda}x_{ij}{\hat{A}_{i{j}}}.
  \end{equation*}
Consider the code ${\cal{S}}_{N}\triangleq \{{\hat{A}_{01}^{H}}{S}|{S}\in \cal{S} \}$. Clearly, from Lemma 1, ${\cal{S}}_{N}$ is equivalent to $\cal{S}$. The weight matrices of ${\cal{S}}_{N}$ are
\begin{equation*}
A_{ij}={\hat{A}_{01}^{H}}A_{ij}'~\text{for}~0\leq i \leq g-1 ~\text{and}~ 1\leq j \leq \lambda.
\end{equation*}

We call the code ${\cal{S}}_{N}$ to be the normalized code of $\cal{S}$. In general, any unitary-weight STBC with one of its weight matrices being the identity matrix is called \textit{normalized} unitary-weight STBC. Studying unitary-weight STBCs becomes simpler by studying the normalized unitary-weight STBCs. Now, the conditions presented in (\ref{nasc}) and (\ref{nasc1}) can be written as
\begin{align}\label{nascm}
A_{ij}^{H}&=-A_{ij} ~~ (\text{equivalently}~ A_{ij}^{2}=-I_{n})\\
\label{nascm1} A_{0{j_1}}^{H}A_{i{j_2}}&=A_{i{j_2}}A_{0{j_1}},
\mbox{ for $i\not=0$ and $1\leq j, j_1,j_2 \leq \lambda$}
\end{align}  and
\begin{align}\label{nascm3}
A_{i{j_1}}A_{k{j_2}}=-A_{k{j_2}}A_{i{j_1}},
\end{align}
for $1\leq i\not=k \leq g-1$ and $1\leq  j_1,j_2 \leq \lambda$.

The grouping of weight matrices is shown below.
\begin{center}
 \begin{tabular}{c|c|c|c|c}
 $A_{01}=I_{n}$ & $A_{11}$ & $A_{21}$ & $\ldots $&$ A_{{(g-1)}1}$\\
 \hline
 $A_{02} $& $A_{12}$ & $A_{22}$ & $\ldots $&$ A_{{(g-1)}2}$\\
 \hline
 $\vdots $& $\vdots$ & $\vdots$ & $\ddots $&$ \vdots$\\
 \hline
 $A_{0{\lambda}} $& $A_{1{\lambda}}$ & $A_{2{\lambda}}$ & $\ldots $&$ A_{({g-1}){\lambda}}$
 \end{tabular}
 \end{center}
\section{An upper bound on the rate of 3-and 4-real symbol decodable unitary weight STBCs}\label{sec3}
In this section, we determine the upper bound on the rate of 3-and 4-real symbol decodable $2^{a}\times2^{a}$ UWDs and also give a construction scheme to obtain designs meeting this upper bound. To do so, we make use of the following lemmas regarding matrices of size $n\times n$.
\begin{lemma}[\cite{ShM}]{\label{SM}}
Consider $n\times n$ matrices with complex entries.
\begin{enumerate}
  \item If $n=2^{a}n_{0}$, with $n_{0}$ odd, then there are $l$ elements of $GL(n,\mathbb{C})$ that anti-commute pairwise if and only if $l\leq 2a+1$.
  \item If $n=2^{a}$ and matrices $F_{1},\ldots,F_{2a}$ anti-commute pairwise, then the set of products $F_{i_1}F_{i_2}\cdots F_{i_s}$ with $1\leq i_1 < \cdots <i_s \leq 2a$  along with $I_{n}$ forms a basis for the $2^{2a}$ dimensional space of all $n\times n$ matrices over $\mathbb{C}$. In each case $F_{i}^{2}$ is a scalar matrix (i.e. $F_{i}^{2}=cI_{n},~\text{ where } c\in {\mathbb{C}}$).
\end{enumerate}
\end{lemma}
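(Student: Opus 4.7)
The plan is to derive both parts from three key observations: $(i)$ each square $F_i^2$ lies in the center of the algebra $\mathcal{A}$ generated by the $F_k$'s and hence acts by a scalar on each irreducible component of the defining representation, $(ii)$ every product $F_I := F_{i_1}\cdots F_{i_s}$ with $i_1 < \cdots < i_s$ has trace zero unless $F_I$ is itself central, and $(iii)$ modulo scalars, these products multiply according to symmetric difference of index sets. Combining these, a trace-based orthogonality argument yields linear independence of the $F_I$'s, and comparing their count with $\dim M_n(\mathbb{C}) = n^2$ delivers both the upper bound on $l$ and the spanning claim of part (2).

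First I would verify $(i)$ by the direct computation $F_j F_i^2 = (F_j F_i)F_i = -F_i F_j F_i = F_i^2 F_j$ for $j \neq i$, so $F_i^2$ commutes with every generator and hence with all of $\mathcal{A}$. Restricting to an irreducible $\mathcal{A}$-subrepresentation of $\mathbb{C}^n$ and invoking Schur's lemma then gives $F_i^2 = c_i I_n$ on that block, which after rescaling becomes $\pm I_n$. For $(ii)$, a sign count shows $F_I F_j = (-1)^s F_j F_I$ when $j \notin I$ and $F_I F_j = (-1)^{s-1} F_j F_I$ when $j \in I$; whenever some $F_j$ anti-commutes with $F_I$, the cyclic invariance of trace forces $\mathrm{tr}(F_I) = \mathrm{tr}(F_j F_I F_j^{-1}) = -\mathrm{tr}(F_I) = 0$. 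The only $F_I$'s that commute with every generator are $I_n$ and, when $l$ is odd, the full product $F_1 F_2 \cdots F_l$.

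For part (2), set $l = 2a$. Suppose $\sum_I c_I F_I = 0$; for each subset $J \subseteq \{1,\ldots,2a\}$, multiply on the right by $F_J^{-1}$, which by $(iii)$ turns the sum into $\sum_I c_I \gamma_{I,J} F_{I \,\triangle\, J}$ for nonzero scalars $\gamma_{I,J}$. Taking traces and applying $(ii)$ (noting that for even $l$ no nonempty $F_K$ is central) kills every term except $I = J$, yielding $c_J = 0$. Thus the $2^{2a}$ products $\{F_I\}$ are linearly independent in the $2^{2a}$-dimensional space $M_{2^a}(\mathbb{C})$ and hence form a basis.

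For part (1), the same argument applied to general $l$ gives $2^l \leq n^2 = 2^{2a}$ when $l$ is even, i.e.\ $l \leq 2a$. When $l$ is odd, $F_1 \cdots F_l$ is central and hence scalar on each irreducible component, which forces $F_I$ and $F_{\{1,\ldots,l\} \setminus I}$ to be proportional; only $2^{l-1}$ of the $2^l$ products remain independent, so $l - 1 \leq 2a$, i.e.\ $l \leq 2a + 1$. Existence at $l = 2a+1$ is established by an explicit inductive construction via tensor products of Pauli-type matrices. I expect the main obstacle to be the odd-$l$ collapse: one must argue carefully that the central element $F_1 \cdots F_l$ is globally scalar (not merely blockwise) and then track signs to express $F_{\{1,\ldots,l\}\setminus I}$ as an explicit scalar multiple of $F_I$, which is precisely what sharpens the bound from $2a$ to $2a+1$.
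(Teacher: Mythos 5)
The paper itself offers no proof of this lemma --- it is imported verbatim from the cited Shapiro--Martin note --- so your attempt has to stand on its own. Your treatment of part (2) is essentially the standard trace-orthogonality argument and is sound \emph{once} one knows each $F_i^2$ is a genuine scalar matrix: the sign count comparing $F_IF_j$ with $F_jF_I$, the observation that for even $l=2a$ the only central product is $I_n$, and the conclusion that $2^{2a}$ independent elements of the $2^{2a}$-dimensional space $M_{2^a}(\mathbb{C})$ form a basis are all correct. But the reduction to scalar squares is exactly where the argument is thin. Centrality of $F_i^2$ in the generated algebra $\mathcal{A}$ gives a scalar only on each irreducible constituent; you assume without justification that $\mathbb{C}^n$ is completely reducible over $\mathcal{A}$, and even granting that, the scalars can differ from block to block, so ``after rescaling becomes $\pm I_n$'' does not follow. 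Concretely, if $P,Q\in M_2(\mathbb{C})$ are invertible and anti-commute, then $\mathrm{diag}(P,2P)$ and $\mathrm{diag}(Q,Q)$ in $M_4(\mathbb{C})$ are invertible and anti-commute, yet the square of the first is $\mathrm{diag}(P^2,4P^2)$, central but not scalar. Without globally scalar squares your multiplication rule $F_IF_J=\gamma_{I,J}F_{I\,\triangle\,J}$ fails (the factor is a central element, not a scalar) and the trace computation collapses. Scalarity of the squares must be proved first --- in Shapiro--Martin it emerges from an inductive tensor factorization $M_n(\mathbb{C})\cong M_2(\mathbb{C})\otimes M_{n/2}(\mathbb{C})$ --- although in this paper's application it comes for free, since the $F_i$ are chosen unitary and anti-Hermitian so that $F_i^2=-I_n$.

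The more serious failure is part (1). Your dimension count gives $2^l\leq n^2=2^{2a}n_0^2$ (or $2^{l-1}\leq n^2$ for odd $l$), hence only $l\leq 2a+2\log_2 n_0+1$, which is not the claimed bound when $n_0>1$: for $n=6$ (so $a=1$, $n_0=3$) you would conclude $l\leq 6$ where the lemma asserts $l\leq 3$. The odd factor $n_0$ has to be eliminated by a different mechanism. The base case is that for odd $n$ no two invertible matrices can anti-commute at all (taking determinants in $AB=-BA$ gives $\det A\,\det B=(-1)^n\det A\,\det B$), and the inductive step peels off one anti-commuting pair while halving the dimension, reducing $l$ by $2$ and $a$ by $1$. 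A count of linearly independent products inside $M_n(\mathbb{C})$ cannot detect the $2$-adic valuation of $n$, so this part of your plan needs to be replaced rather than tightened.
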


Let $F_{1},\ldots,F_{2a}$ be anti-commuting, anti-Hermitian, unitary matrices (so that $F_{i}^{2}=-I_{n},~ i=1,2,\cdots,2a$). We can get these matrices from matrix realizations of Clifford algebras and is given in \cite{TiH}.

\begin{lemma}[\cite{KSR}]{\label{Shiparo}}
The product $F_{i_1}F_{i_2}\cdots F_{i_s}$ with $1\leq i_1 < \cdots <i_s\leq 2a$ squares to $(-1)^{\frac{s(s+1)}{2}}I_{n}$.
\end{lemma}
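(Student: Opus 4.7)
The plan is to proceed by induction on the number $s$ of factors, using only the hypotheses $F_i^2 = -I_n$ and $F_i F_j = -F_j F_i$ for $i \neq j$. The base case $s=1$ is immediate: $F_{i_1}^2 = -I_n = (-1)^{1(1+1)/2} I_n$.

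For the inductive step, set $P = F_{i_1} F_{i_2} \cdots F_{i_s}$. The key maneuver is to compute
\begin{equation*}
P^2 = F_{i_1} F_{i_2} \cdots F_{i_s} F_{i_1} F_{i_2} \cdots F_{i_s}
\end{equation*}
by sliding the second occurrence of $F_{i_1}$ leftward past the $s-1$ matrices $F_{i_s}, F_{i_{s-1}}, \ldots, F_{i_2}$. Each of those $s-1$ transpositions flips a sign by anti-commutativity, producing an overall factor $(-1)^{s-1}$. Once the two copies of $F_{i_1}$ are adjacent, they collapse via $F_{i_1}^2 = -I_n$, contributing another factor $-1$. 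What remains is
\begin{equation*}
P^2 = (-1)^{s} \bigl(F_{i_2} \cdots F_{i_s}\bigr)^2.
\end{equation*}

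I would then apply the induction hypothesis to $F_{i_2} \cdots F_{i_s}$, a product of $s-1$ pairwise anti-commuting anti-Hermitian unitaries, which gives $(F_{i_2} \cdots F_{i_s})^2 = (-1)^{(s-1)s/2} I_n$. Combining, $P^2 = (-1)^{s + s(s-1)/2} I_n = (-1)^{s(s+1)/2} I_n$, which is exactly the claim.

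I do not expect any real obstacle: the only care required is counting the $s-1$ anti-commutations correctly and then simplifying the exponent via $s + s(s-1)/2 = s(s+1)/2$. As a sanity check, one can arrive at the same formula without induction by reversing the product in $\binom{s}{2}$ swaps to get $F_{i_1} \cdots F_{i_s} = (-1)^{s(s-1)/2} F_{i_s} \cdots F_{i_1}$, then using $F_{i_k}^{-1} = -F_{i_k}$ to read off $P^{-1} = (-1)^{s(s+1)/2} P$ and therefore $P^2 = (-1)^{s(s+1)/2} I_n$; the inductive presentation is marginally shorter and keeps the argument transparently self-contained.
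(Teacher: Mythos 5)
Your proof is correct. Note that the paper itself gives no proof of this lemma --- it is imported verbatim from reference \cite{KSR} --- so there is nothing to compare against; your induction (peel off the second copy of $F_{i_1}$ at the cost of $s-1$ anti-commutations plus one application of $F_{i_1}^2=-I_n$, then invoke the hypothesis on the remaining $s-1$ factors) is the standard argument, the sign bookkeeping $s+s(s-1)/2=s(s+1)/2$ checks out, and the alternative derivation via $P^{-1}=(-1)^{s(s+1)/2}P$ is also sound.
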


\begin{lemma}[{\cite{KSR}}]
Let $\Omega_{1}=\{F_{i_1},F_{i_2},\cdots,F_{i_s}\}$ and $\Omega_{2}=\{F_{j_1},F_{j_2},\cdots,F_{j_r}\}$ with $1\leq i_1 < \cdots <i_s\leq 2a$ and $1\leq j_1 < \cdots <j_r\leq 2a$. Let $|\Omega_{1}\cap\Omega_{2}|=p$. Then, the product matrix $F_{i_1}F_{i_2}\cdots F_{i_s}$ commutes with $F_{j_1}F_{j_2}\cdots F_{j_r}$, if exactly one of the following is satisfied, and anti-commutes otherwise.
\begin{enumerate}
  \item $r$, $s$ and $p$ are all odd.
  \item The product $rs$ is even and $p$ is even (including 0).
\end{enumerate}
\end{lemma}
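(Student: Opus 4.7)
My plan is to compute $BA$ in terms of $AB$ by shuttling the factors of $B=F_{j_1}\cdots F_{j_r}$ one at a time through the product $A=F_{i_1}\cdots F_{i_s}$, accumulating a single sign that depends only on $r$, $s$ and the overlap $p$. Since each $F_i$ is anti-Hermitian, unitary and squares to $-I_n$, and distinct $F_i$'s anti-commute, these are the only two facts I will need.

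First I would establish the one-step rule: for any fixed index $j_k$, moving $F_{j_k}$ from the left of $A$ to the right of $A$ produces a sign $(-1)^{s}$ if $F_{j_k}\notin\Omega_1$ (because $F_{j_k}$ anti-commutes with every factor of $A$), and a sign $(-1)^{s-1}$ if $F_{j_k}\in\Omega_1$ (because $F_{j_k}$ commutes with the unique factor of $A$ equal to itself and anti-commutes with the remaining $s-1$ factors). In both cases the sign can be written compactly as $(-1)^{s-\chi(j_k)}$ where $\chi(j_k)\in\{0,1\}$ is the indicator of $F_{j_k}\in\Omega_1$.

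Next I would iterate this one-step rule for $k=1,2,\ldots,r$, peeling the rightmost factor off $B$ and pushing it past $A$ each time. Multiplying the $r$ signs together gives
\begin{equation*}
BA \;=\; (-1)^{\sum_{k=1}^{r}(s-\chi(j_k))}\,AB \;=\; (-1)^{rs-p}\,AB,
\end{equation*}
since $\sum_{k=1}^{r}\chi(j_k)=|\Omega_1\cap\Omega_2|=p$. Therefore $A$ and $B$ commute iff $rs-p$ is even, i.e., iff $rs$ and $p$ have the same parity. Splitting into cases, $rs$ is odd exactly when both $r$ and $s$ are odd, yielding case (1) of the lemma, while $rs$ even paired with $p$ even yields case (2); any remaining parity combination gives anti-commutation.

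The argument is essentially just bookkeeping of signs, so I do not anticipate a real obstacle; the only subtlety worth flagging in the write-up is the distinction between $F_{j_k}$ being equal to one of the factors of $A$ (contributing $+1$) versus being distinct from all of them (contributing $-1$), which is exactly what produces the correction $-p$ in the exponent and ultimately distinguishes the two listed commuting cases from the remaining ones.
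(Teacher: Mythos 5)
Your proof is correct: the one-step sign $(-1)^{s-\chi(j_k)}$ is right (a factor $F_{j_k}$ commutes with itself and anti-commutes with the $s-1$ or $s$ remaining distinct factors of $A$), the accumulated sign $(-1)^{rs-p}$ follows since the indices $i_1<\cdots<i_s$ are distinct so each $\chi(j_k)$ is a genuine $0$/$1$ indicator summing to $p$, and the parity condition $rs\equiv p \pmod 2$ reproduces exactly the two listed commuting cases. The paper itself states this lemma as imported from \cite{KSR} without proof, so there is nothing to compare against; your sign-bookkeeping argument is the standard one and is complete.
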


\begin{lemma}[\cite{KSR}]\label{thmssd}
The maximum rate in cspcu of a $2^{a} \times 2^{a}$ unitary-weight SSD code is $\frac{a}{2^{a-1}}$.
\end{lemma}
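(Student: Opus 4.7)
The plan is to normalize as in Lemma~\ref{normal}, apply the Clifford-type structural result Lemma~\ref{SM} to the anti-commuting skew-Hermitian weight matrices, and then exploit the extra row of weight matrices to tighten the naive bound by one.

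After assuming $A_{01}=I_n$, conditions \eqref{nascm}--\eqref{nascm3} imply every other $A_{ij}$ is unitary skew-Hermitian (so $A_{ij}^2=-I_n$), any two weight matrices from different groups anti-commute, and no relation is imposed within a single group. I would pick one matrix from each non-identity group to form
\[
\mathcal{T}=\{A_{02},\,A_{11},\,A_{21},\,\ldots,\,A_{(g-1)1}\},
\]
a set of $g$ pairwise anti-commuting elements of $GL(2^a,\mathbb{C})$. Lemma~\ref{SM}(1) with $n_{0}=1$ then yields $g\le 2a+1$ and hence $R\le (2a+1)/2^a$, so the real content of the lemma is sharpening this by one to $g\le 2a$.

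To obtain the sharpening I would use the second weight matrix in each non-identity group. Note that $A_{12}$ anti-commutes with $A_{02}$ by \eqref{nascm1} and with each $A_{k1}$, $k\ge 2$, by \eqref{nascm3}; thus both $A_{11}$ and $A_{12}$ anti-commute with every element of the common subset
\[
\mathcal{T}'=\{A_{02},\,A_{21},\,A_{31},\,\ldots,\,A_{(g-1)1}\}.
\]
Assume for contradiction that $g=2a+1$, so $|\mathcal{T}'|=2a$. By Lemma~\ref{SM}(2) the ordered products of elements of $\mathcal{T}'$, together with $I_n$, form a basis of the space of $2^a\times 2^a$ complex matrices. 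Expanding any matrix that anti-commutes with every element of $\mathcal{T}'$ in this basis and using the rule that a length-$s$ monomial $F_I$ anti-commutes with a generator $F_k$ iff ($k\notin I$ and $s$ is odd) or ($k\in I$ and $s$ is even), a short parity argument shows that the only surviving monomial is the full product $F_{\mathcal{T}'}$. Hence both $A_{11}$ and $A_{12}$ are scalar multiples of $A_{02}A_{21}\cdots A_{(g-1)1}$ and therefore $\mathbb{R}$-linearly dependent, contradicting the LSTD requirement that the weight matrices be linearly independent over $\mathbb{R}$.

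This contradiction forces $g\le 2a$ and yields the claimed bound $R\le a/2^{a-1}$ cspcu. I expect the main obstacle to be the parity bookkeeping that isolates the unique surviving monomial; the step is self-contained but relies on Lemma~\ref{Shiparo} together with the product-commutation lemma immediately preceding the current one. Achievability at $g=2a$ would then follow from a standard Clifford-algebra construction: take $2a$ pairwise anti-commuting skew-Hermitian unitaries (from a matrix realization of the relevant Clifford algebra) as the first row of the weight-matrix table, and define each $A_{i2}$ as a suitably scaled product of generators from the other non-identity groups so that the two weight matrices within each group remain $\mathbb{R}$-linearly independent while all inter-group anti-commutation relations hold.
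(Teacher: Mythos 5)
This lemma is imported from \cite{KSR}; the paper under review gives no proof of it, so there is no in-paper argument to compare against. Your route --- normalize via Lemma~\ref{normal}, extract one matrix per group to get $g\le 2a+1$ from Lemma~\ref{SM}(1), then rule out $g=2a+1$ by showing that anything anti-commuting with a full set of $2a$ pairwise anti-commuting generators must be a scalar multiple of the top monomial --- is the standard Clifford-algebra argument and, as far as one can tell from the fragment of \cite{KSR} quoted in Appendix~A, essentially the one used there. The parity bookkeeping you defer does check out: by the product-commutation lemma, a length-$s$ monomial $F_I$ anti-commutes with a generator $F_k$ iff ($s$ odd and $k\notin I$) or ($s$ even and $k\in I$), so the only monomial anti-commuting with all $2a$ generators is the full product, and expanding $A_{11},A_{12}$ in the basis of Lemma~\ref{SM}(2) kills every other coefficient.

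The one step that does not hold as written is the last inference: from ``$A_{11}$ and $A_{12}$ are both scalar multiples of $P=A_{02}A_{21}\cdots A_{(g-1)1}$'' you conclude they are $\mathbb{R}$-linearly dependent. For arbitrary complex scalars this is false --- $P$ and $jP$ are $\mathbb{R}$-linearly independent, and $\mathbb{R}$-linear independence is exactly what the LSTD definition requires --- so the contradiction has not yet been reached. To close it you must invoke \eqref{nascm}: both $A_{11}$ and $A_{12}$ square to $-I_{n}$, while Lemma~\ref{Shiparo} gives $P^{2}=(-1)^{a(2a+1)}I_{n}=(-1)^{a}I_{n}$, so writing $A_{1k}=\alpha_{k}P$ forces $\alpha_{k}^{2}=(-1)^{a+1}$, i.e.\ $\alpha_{k}\in\{\pm1\}$ for $a$ odd and $\alpha_{k}\in\{\pm j\}$ for $a$ even. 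Hence $A_{12}=\pm A_{11}$, which is the genuine violation of linear independence. With that one line added the upper bound is sound; the achievability half is only sketched, but the construction you gesture at (first row $I_{n},F_{1},\dots,F_{2a-1}$ with each $A_{i2}$ a suitably scaled product of the complementary generators) is precisely the Case-3 construction of \cite{KSR} that Appendix~A quotes, so that part is acceptable for a cited result.
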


The above Lemma is equivalent to showing that $2a$ is the maximum number of groups possible for 2-real symbol (1-complex symbol) decodable $2^{a} \times 2^{a}$ UWD.

Though the following theorem (proof given in Appendix A) does not give a tight bound, Theorem \ref{thm3sym1} and Theorem \ref{thmcode} are used in Theorem  \ref{thm3sym2}, to get a tight bound.
\begin{theorem}\label{thm3sym1}
For a 3-real symbol decodable $2^{a} \times 2^{a}$ UWD, the rate in cspcu is upper bounded by $\frac{3(2a-1)}{2^{a+1}}$, which is not tight.
\end{theorem}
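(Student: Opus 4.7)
The plan is to convert the rate bound into an inequality on the number of groups $g$ and then rule out the largest values of $g$ via a Clifford algebra argument on the normalized weight matrices. Since $K = 3g$ and $N = 2^a$, the rate equals $3g/2^{a+1}$, so the claimed bound is equivalent to $g \le 2a - 1$. By Lemma \ref{normal} I may assume the STBC is normalized, so $A_{01} = I_n$ and conditions (\ref{nascm}), (\ref{nascm1}), (\ref{nascm3}) hold for the remaining weight matrices. An immediate observation is that the $g$ matrices $A_{11}, A_{21}, \ldots, A_{(g-1)1}, A_{02}$ are pairwise anti-commuting anti-Hermitian unitaries, so Lemma \ref{SM}(1) already gives the crude bound $g \le 2a + 1$. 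The work is in sharpening this by two.

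My plan is to treat $g = 2a + 1$ and $g = 2a$ separately. In each case I would relabel the above list as $F_1, \ldots, F_g$ (with $F_i = A_{i1}$ for $i < g$ and the last slot being $A_{02}$) and use the first $2a$ of them as a Clifford basis, so that by Lemma \ref{SM}(2) every $2^a \times 2^a$ matrix expands uniquely as $\sum_{S \subseteq \{1,\ldots,2a\}} c_S F_S$ with $F_S = F_{i_1} F_{i_2} \cdots F_{i_s}$ for $S = \{i_1 < \cdots < i_s\}$. The (unlabeled) commutation lemma stated between Lemma \ref{Shiparo} and Lemma \ref{thmssd}, specialized to $r = 1$, reduces the requirement ``$F_S$ anti-commutes with $F_i$'' to the combinatorial condition that either ($i \in S$ and $|S|$ is even) or ($i \notin S$ and $|S|$ is odd). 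So every anti-commutation constraint on an unknown $A_{ij}$ becomes a constraint on the support set $S$ alone. For $g = 2a + 1$, solving these constraints forces $A_{02}$ to be a scalar multiple of $G := F_1 F_2 \cdots F_{2a}$, and then $A_{12}$, which must anti-commute with $F_2, \ldots, F_{2a}$ and with $A_{02}$, collapses onto the line $\mathbb{C} F_1$; combined with $A_{12}^2 = -I_n$ this gives $A_{12} = \pm A_{11}$, contradicting the $\mathbb{R}$-linear independence of the weight matrices. For $g = 2a$, the same expansion applied to each $A_{i2}$ and $A_{i3}$ yields $A_{i2}, A_{i3} \in \operatorname{span}_{\mathbb{C}}(F_i, G)$ for every nonzero group $i$.

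The finish of the $g = 2a$ case is a real-dimension count inside $\operatorname{span}_{\mathbb{C}}(F_i, G)$. Using Lemma \ref{Shiparo} together with the reversal identity $G^H = (-1)^{a(2a-1)} G$, one checks that the anti-Hermitian part of this 2-complex-dimensional subspace is a 2-real-dimensional subspace, for both parities of $a$. Since $A_{i1} = F_i$, $A_{i2}$ and $A_{i3}$ are all required to be anti-Hermitian and $\mathbb{R}$-linearly independent, three such matrices cannot fit into a 2-real-dimensional space, giving the contradiction and hence $g \le 2a - 1$. The main obstacle I anticipate is precisely this parity bookkeeping: the real versus imaginary split of the scalar coefficients in the anti-Hermitian subspace of $\operatorname{span}_{\mathbb{C}}(F_i, G)$ differs between $a$ even and $a$ odd, but both cases must land on dimension exactly $2$ for the contradiction to go through uniformly. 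Non-tightness is expected because this argument ignores the constraints imposed by $A_{03}$ and by simultaneous analysis of columns $2$ and $3$; Theorem \ref{thm3sym2} will subsequently harness these extra relations to push the bound down to the tight value $g \le 2(a-1)$.
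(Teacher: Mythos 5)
Your argument is correct and it does establish $g \le 2a-1$, but it gets there by a noticeably different route from the paper's Appendix A, so a comparison is worthwhile. The paper first obtains $g \le 2a$ by deleting one weight matrix from each group and invoking the single-symbol-decodable bound of Lemma \ref{thmssd}, and then excludes $g=2a$ by importing the classification from Case 3 of the proof of Theorem 1 in \cite{KSR} (which pins $A_{j2}$ to $\pm m\prod_{i\neq j}F_i$) and arguing that the $A_{j3}$'s, confined to $\operatorname{span}\{F_j,F_{2a},F_1F_2\cdots F_{2a}\}$, cannot remain linearly independent. You instead start from the weaker bound $g\le 2a+1$ of Lemma \ref{SM}(1), so you must dispose of two cases, but you do both self-containedly: adjoining $A_{02}$ to the Clifford basis collapses each $A_{i2},A_{i3}$ into $\operatorname{span}_{\mathbb{C}}(F_i,G)$ in a single support-set computation, and the contradiction at $g=2a$ is a clean real-dimension count --- the anti-Hermitian part of that span is exactly $2$-real-dimensional for either parity of $a$ (since $G^{H}=(-1)^{a}G$, the admissible coefficient of $G$ is real for $a$ odd and purely imaginary for $a$ even), so it cannot contain the three $\mathbb{R}$-independent anti-Hermitian matrices $A_{i1},A_{i2},A_{i3}$; your $g=2a+1$ case correctly forces $A_{02}$ onto $\mathbb{C}G$ and then $A_{12}$ onto $\pm A_{11}$. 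What your version buys is independence from the external \cite{KSR} case analysis and a more transparent reason for the failure at $g=2a$; what the paper's version buys is a one-line reduction to $g\le 2a$ and reuse of machinery already needed elsewhere. Like the paper, you do not actually prove non-tightness within this proof --- that is only settled by Theorem \ref{thm3sym2} --- which is consistent with how the paper itself organizes the argument.
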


\begin{theorem}\label{thmcode}
There exists a rate $\frac{a-1}{2^{a-2}}$ cspcu, 4-real symbol decodable $2^{a} \times 2^{a}$ UWD for $a\ge2$.
\end{theorem}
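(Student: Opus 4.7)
The plan is to prove Theorem \ref{thmcode} by explicit construction. The target rate $\frac{a-1}{2^{a-2}}$ cspcu together with $\lambda=4$ forces $K = 8(a-1)$ real variables and $g = 2(a-1)$ groups of $4$ weight matrices, so I need to exhibit $8(a-1)$ weight matrices on $\mathbb{C}^{2^a \times 2^a}$ partitioned into $2(a-1)$ groups satisfying the normalized conditions (\ref{nascm}), (\ref{nascm1}), (\ref{nascm3}).

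My starting point is the Clifford family $F_1,\ldots,F_{2a}$ of pairwise anti-commuting, anti-Hermitian unitary matrices on $\mathbb{C}^{2^a \times 2^a}$ guaranteed by Lemma \ref{SM} and realized explicitly as in \cite{TiH}. Every weight matrix will be taken as a product $F_S = F_{i_1}\cdots F_{i_s}$, rescaled by $j$ whenever Lemma \ref{Shiparo} says the raw product squares to $+I_n$ rather than $-I_n$. This reduces the conditions (\ref{nascm}), (\ref{nascm1}), (\ref{nascm3}) to purely combinatorial parity conditions on the index subsets $S \subseteq \{1,\ldots,2a\}$ via the commutation criterion for products of the $F_i$'s.

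The key structural ingredient is an internal, mutually commuting $4$-set $H = \{I_n,\, F_{2a-3}F_{2a-2},\, F_{2a-1}F_{2a},\, F_{2a-3}F_{2a-2}F_{2a-1}F_{2a}\}$ together with pairwise anti-commuting group labels $G_1,\ldots,G_{g-1}$ all lying in the centralizer of $H$. Groups are then defined by $\mathcal{G}_i = \{G_i H_1, G_i H_2, G_i H_3, G_i H_4\}$ for $i=1,\ldots,g-1$; cross-group anti-commutation follows automatically from $G_iG_k = -G_kG_i$ combined with the commuting $H$-structure. The labels $G_i = F_i$ for $i = 1,\ldots,2a-4$ together with the ``long-product'' label $G_{2a-3} = F_1F_2\cdots F_{2a-4}$ supply $2a-3$ admissible labels by direct application of the commutation criterion. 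Group $0$ (which must contain $A_{01} = I_n$ after normalization via Lemma \ref{normal}) is constructed separately, with its three non-identity matrices $A_{02}, A_{03}, A_{04}$ chosen as products of the $F_i$'s (of larger cardinality, involving both the $F_k$ with $k \le 2a-4$ and the trailing generators $F_{2a-3},\ldots,F_{2a}$) whose parity patterns anti-commute with every $G_iH_j$ already used.

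The main obstacle is precisely the construction of this last group: the maximum number of pairwise anti-commuting labels that simultaneously commute with all of $H$ is exactly $2a-3$, one short of $g = 2(a-1)$, so the extra group cannot be produced from the $G_iH$ template and must be designed \emph{by hand} using the full parity analysis of Lemma~\ref{Shiparo} together with the commutation criterion. Once this is done, linear independence of the $8(a-1)$ weight matrices follows from Lemma~\ref{SM}, part 2, because distinct products $F_S$ form a basis of $M_{2^a}(\mathbb{C})$; the normalization $A_{01} = I_n$ is obtained by left-multiplying the whole code by a suitable $F_{S}^H$ via Lemma \ref{normal}.
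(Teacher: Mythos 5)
Your overall architecture --- $2a-3$ pairwise anti-commuting group labels, each multiplied by a commuting four-element set $H$, plus one group containing $I_n$ --- is essentially the paper's construction transported into the Clifford-monomial basis: the paper takes $2(a-2)+1=2a-3$ anti-commuting, anti-Hermitian unitary matrices $E_1,\ldots,E_{2a-3}$ on $\mathbb{C}^{2^{a-2}\times 2^{a-2}}$, repeats each four times along the block diagonal to form the labels $A_{i1}$, and uses three commuting diagonal $\pm1$ block-sign matrices $A_{02},A_{03},A_{04}$ (the analogue of your $H$) both as group $0$ and as the multipliers producing $A_{ij}=A_{0j}A_{i1}$. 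The genuine gap in your proposal sits exactly where you flag ``the main obstacle'': group $0$ is never constructed, and the requirement you impose on it is wrong. Condition (\ref{nascm1}) reads $A_{0j}^{H}A_{ik}=A_{ik}A_{0j}$; if $A_{0j}$ is to \emph{anti-commute} with every $G_iH_l$, as you demand, it must be anti-Hermitian, and the commutation criterion of Lemma 4 shows no monomial $F_S$ anti-commutes simultaneously with all of $F_1,\ldots,F_{2a-4}$ and with your long-product label $F_1F_2\cdots F_{2a-4}$: the only parity patterns anti-commuting with every single $F_i$, $i\le 2a-4$, are ($|S|$ odd, $S\cap\{1,\ldots,2a-4\}=\emptyset$) and ($|S|$ even, $S\supseteq\{1,\ldots,2a-4\}$), and both have even overlap with the long product, hence commute with it. The fix is the opposite of what you propose and is easy: take $A_{0j}$ \emph{Hermitian} and \emph{commuting} with every $G_iH_l$, e.g.\ $A_{02}=jF_{2a-3}F_{2a-2}$, $A_{03}=jF_{2a-1}F_{2a}$, $A_{04}=F_{2a-3}F_{2a-2}F_{2a-1}F_{2a}$ --- the suitably rescaled elements of your own set $H$, which is precisely what the paper's diagonal sign matrices realize.

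Two further points. First, your label formula degenerates at $a=2$ (a case the theorem covers): there are no $F_i$ with $i\le 2a-4$ and the long product becomes the empty product $I_n$, which is Hermitian and coincides with $A_{01}$; the paper's block construction handles this automatically, with $E_1=\pm j$ as a $1\times 1$ matrix giving $A_{11}=jI_4$ as in Example \ref{example1}. Second, your assertion that $2a-3$ is the \emph{maximum} number of admissible labels is neither proved nor needed for an existence theorem. With group $0$ repaired as above your construction does close, yielding the same count $g=2a-2$, $K=8(a-1)$ and rate $(a-1)/2^{a-2}$ as the paper, but as submitted the proof is incomplete at its decisive step.
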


\begin{proof}
     Proof is by explicit construction. This construction is based on the proof of Theorem 6 in \cite{RaR}.

    For $ a\ge2$ let $m=2^{a-2}$ and $n=2^a.$ Then from Lemma \ref{SM}, for $m\times m$ matrices, we can have $2(a-2)+1$  anti-Hermitian and anti-commuting unitary matrices. Let them be $E_{1},E_{2},\cdots,E_{2a-3}$. Let $A_{01}=I_{n}$      and for $1\le i\le 2a-3,$
{\scriptsize
\begin{align}
\nonumber A_{i1} =    \begin{bmatrix}
       E_{i} & 0_{m} & 0_{m} & 0_{m} \\
       0_{m} & E_{i} & 0_{m} & 0_{m} \\
       0_{m} & 0_{m} & E_{i} & 0_{m} \\
       0_{m} & 0_{m} & 0_{m} & E_{i} \\
     \end{bmatrix};
\nonumber A_{02} =    \begin{bmatrix}
       I_{m} & 0_{m} & 0_{m} & 0_{m} \\
       0_{m} & -I_{m} & 0_{m} & 0_{m} \\
       0_{m} & 0_{m} & I_{m} & 0_{m} \\
       0_{m} & 0_{m} & 0_{m} & -I_{m} \\
     \end{bmatrix};
\end{align}
\begin{align}
\nonumber A_{03} =\begin{bmatrix}
       I_{m} & 0_{m} & 0_{m} & 0_{m} \\
       0_{m} & I_{m} & 0_{m} & 0_{m} \\
       0_{m} & 0_{m} & -I_{m} & 0_{m} \\
       0_{m} & 0_{m} & 0_{m} & -I_{m} \\
     \end{bmatrix};
\nonumber A_{04} =\begin{bmatrix}
       I_{m} & 0_{m} & 0_{m} & 0_{m} \\
       0_{m} & -I_{m} & 0_{m} & 0_{m} \\
       0_{m} & 0_{m} & -I_{m} & 0_{m} \\
       0_{m} & 0_{m} & 0_{m} & I_{m} \\
     \end{bmatrix};
\end{align}}
 $A_{i2}=A_{02}A_{i1}$;~ $A_{i3}=A_{03}A_{i1}$ and $A_{i4}=A_{04}A_{i1}$.

It can be easily seen that these matrices satisfy the conditions (\ref{nasc}), (\ref{nascm}) and (\ref{nascm3}). So, we have constructed a 4-real symbol $2a-2$ group decodable UWD. The rate of this design is $\frac{4(2a-2)}{2^{a+1}}=\frac{a-1}{2^{a-2}}$ cspcu.
\end{proof}
\begin{corollary}\label{corcode}
There exists a rate $\frac{3(a-1)}{2^{a}}$ cspcu, 3-real symbol decodable $2^{a} \times 2^{a}$ UWD for $a\ge2$.
\end{corollary}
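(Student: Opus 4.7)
The plan is to obtain the 3-real symbol decodable UWD by simply restricting the 4-real symbol decodable UWD constructed in Theorem \ref{thmcode}. That construction produces $4(2a-2)$ weight matrices $\{A_{ij} : 0 \le i \le 2a-3,\ 1 \le j \le 4\}$ arranged into $2a-2$ groups of four. I would discard the last row of the grouping table, keeping only $\{A_{ij} : 0 \le i \le 2a-3,\ 1 \le j \le 3\}$, so that the remaining $3(2a-2)$ matrices form $2a-2$ groups of three.

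Next I would verify that the surviving matrices still constitute a valid $(2a-2)$-group, 3-real-symbol decodable UWD. All three required properties are inherited for free from Theorem \ref{thmcode}: unitarity of each $A_{ij}$ (condition (\ref{nascm})) is a per-matrix property; $\mathbb{R}$-linear independence is preserved under passing to a subset; and the cross-group anti-commutation / zero-trace relations (\ref{nasc}), (\ref{nascm1}), (\ref{nascm3}) are statements about pairs of matrices, which automatically hold for any subset of a collection for which they already hold. In particular, every group still contains the identity-normalized entry $A_{0j}$ or its products $A_{ij}=A_{0j}A_{i1}$ from the original construction, and no group collapses to empty, so we genuinely retain $g=2a-2$ independently decodable groups, each of size $\lambda=3$.

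The total number of real variables is $K = 3(2a-2) = 6(a-1)$ and the block length is $N = 2^a$, giving a rate
\[
R = \frac{K}{2N} = \frac{6(a-1)}{2\cdot 2^a} = \frac{3(a-1)}{2^a}\ \text{cspcu},
\]
as claimed. There is no real obstacle; the corollary is essentially an immediate consequence of Theorem \ref{thmcode}, using the trivial observation that a $g$-group decodable UWD remains $g$-group decodable after deleting any subset of weight matrices within the groups. The mildly non-obvious point, worth mentioning explicitly in the write-up, is that the number of groups $g=2a-2$ is preserved (not the number of matrices per group), which is precisely what allows the factor $\frac{3}{4}$ rate reduction from Theorem \ref{thmcode} to land exactly on $\frac{3(a-1)}{2^a}$.
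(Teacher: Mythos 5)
Your proposal is correct and is exactly the paper's argument: the authors also obtain the corollary by deleting the $A_{i4}$ matrices from the construction of Theorem \ref{thmcode}, keeping all $g=2a-2$ groups with $\lambda=3$ matrices each, so that the rate becomes $\frac{3(2a-2)}{2\cdot 2^{a}}=\frac{3(a-1)}{2^{a}}$ cspcu. Your added remarks on why the decodability conditions are inherited by a subset merely spell out what the paper calls ``straightforward.''
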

\begin{proof}
Straightforward from Theorem \ref{thmcode}, by removing $A_{i_4}$ matrices.
\end{proof}

\begin{theorem}\label{thm3sym2}
For a 3-real symbol decodable $2^{a} \times 2^{a}$ UWD, the rate in cspcu is tightly upper bounded by $\frac{3(a-1)}{2^{a}}$.
\end{theorem}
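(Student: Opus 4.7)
The plan is a proof by contradiction that sharpens Theorem~\ref{thm3sym1} by one unit. Since Corollary~\ref{corcode} already supplies a 3-real-symbol decodable UWD at rate $\frac{3(a-1)}{2^a}$, only the upper bound $g\leq 2(a-1)$ remains. I would suppose $g=2a-1$ and derive a contradiction. Normalize via Lemma~\ref{normal} so $A_{01}=I_n$; by~(\ref{nascm}) and~(\ref{nascm3}) the $2a-2$ matrices $A_{11},\ldots,A_{(g-1)1}$ form a pairwise anti-commuting family of anti-Hermitian unitaries squaring to $-I_n$, and by~(\ref{nascm1}) each of $A_{02},A_{03}$ anti-commutes with every $A_{i1}$.

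I would next reduce to the case in which $A_{02}$ and $A_{03}$ also anti-commute with each other, via a real linear recombination of the weight matrices of group~0 (handling the residual case separately: this is the main technical obstacle). In this reduced situation $\{A_{02},A_{03},A_{11},\ldots,A_{(g-1)1}\}$ is a set of $2a$ pairwise anti-commuting anti-Hermitian unitary matrices in $M_{2^a}(\mathbb{C})$. By part~2 of Lemma~\ref{SM}, these matrices together with $I_n$ and all their products $F_S$ constitute a $\mathbb{C}$-basis of $M_{2^a}(\mathbb{C})$. Set $F_0 = A_{02}A_{03}A_{11}\cdots A_{(g-1)1}$; by Lemma~\ref{Shiparo}, $F_0^2=\pm I_n\neq 0$.

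The core step is to expand each $A_{i2}$ and $A_{i3}$ in this basis. The requirement that $A_{i2}$ anti-commute with each of the $2a$ base matrices except $A_{i1}$ forces, by a case analysis on $|S|$ using the commutation-of-products criterion (the lemma following Lemma~\ref{Shiparo}), that only the basis elements $A_{i1}$ and $F_0$ can have non-zero coefficients, so $A_{i2}=\alpha_iA_{i1}+\beta_iF_0$; symmetrically $A_{i3}=\gamma_iA_{i1}+\delta_iF_0$. The inter-group conditions $\{A_{i2},A_{j2}\}=\{A_{i3},A_{j3}\}=\{A_{i2},A_{j3}\}=0$ for $i\neq j$ then collapse, because $A_{i1},A_{j1},F_0$ pairwise anti-commute, to $\beta_i\beta_j=\delta_i\delta_j=\beta_i\delta_j=0$. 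Hence at most one $\beta_i$ and at most one $\delta_j$ can be non-zero; for every other $i$, $\beta_i=0$ forces $A_{i2}=\alpha_iA_{i1}$ with $\alpha_i\in\mathbb{R}$ (from anti-Hermiticity) and $\alpha_i^2=1$ (from $A_{i2}^2=-I_n$), so $A_{i2}=\pm A_{i1}$, violating the $\mathbb{R}$-linear independence of the weight matrices. Since $g-1=2a-2\geq 2$ for $a\geq 2$, this is the desired contradiction.

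The hard part, as flagged, is the preliminary reduction to anti-commuting $A_{02},A_{03}$. I expect this to require either an explicit $\mathbb{R}$-linear change of variables within group~0, exploiting that such a change preserves $A_{01}=I_n$ and the decodability conditions while adjusting the squaring relations, or a dedicated simultaneous-eigenspace decomposition argument that treats the commuting sub-case directly. Once this reduction is in place, the remainder is a direct Clifford-algebraic calculation in the spirit of~\cite{KSR}, and the bound $g\leq 2(a-1)$, combined with Corollary~\ref{corcode}, yields the tight rate bound $\frac{3(a-1)}{2^a}$ cspcu.
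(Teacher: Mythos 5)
Your overall strategy (fix $g=2a-1$, normalize, and derive a contradiction from the Clifford-algebra expansion, then invoke Corollary~\ref{corcode} for achievability) matches the paper's, and your core computation is correct and attractively short: \emph{if} $\{A_{02},A_{03},A_{11},\ldots,A_{(2a-2)1}\}$ were $2a$ pairwise anti-commuting invertible matrices, then the commutation criterion does force $A_{i2}=\alpha_iA_{i1}+\beta_iF_0$ and $A_{i3}=\gamma_iA_{i1}+\delta_iF_0$, the cross-group anti-commutation does collapse to $\beta_i\beta_j=\delta_i\delta_j=\beta_i\delta_j=0$ for $i\neq j$, and since $2a-2\geq 2$ some group would have $A_{i2}=\pm A_{i1}$, contradicting linear independence.

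However, the step you yourself flag as the "main technical obstacle" --- reducing to the case where $A_{02}$ and $A_{03}$ anti-commute --- is a genuine gap, and it cannot be closed by the $\mathbb{R}$-linear recombination you propose. First, any recombination involving $A_{01}=I_n$ destroys anti-commutation with the $A_{i1}$, so you are restricted to the real span of $A_{02},A_{03}$ alone; second, within that span the two conditions $\beta\beta'+\gamma\gamma'=0$ and $\beta\gamma'+\gamma\beta'=0$ force (up to scaling) the pair $A_{02}-A_{03}$, $A_{02}+A_{03}$, and these need not be invertible. Concretely, for $a$ even take $A_{i1}=F_i$ ($1\leq i\leq 2a-2$), $A_{02}=F_{2a-1}$ and $A_{03}=F_1F_2\cdots F_{2a-2}$: both are anti-Hermitian unitaries squaring to $-I_n$ and anti-commuting with every $A_{i1}$, yet they \emph{commute} with each other, and $(A_{02}-A_{03})^2=-2I_n-2F_{2a-1}F_1\cdots F_{2a-2}$ is singular because $(F_{2a-1}F_1\cdots F_{2a-2})^2=I_n$. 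So the "residual case" is non-empty, and it is precisely the hard case: the paper's Appendix~B shows (Steps~3--5) that the surviving candidate solutions are exactly those in which $A_{02},A_{03},A_{1k},A_{2k}$ reduce to combinations of commuting pairs such as $F_1$ and $F_1F_2\cdots F_{2a-2}$, and eliminating them requires the full coefficient-relation analysis across the three pairings $(A_{1i},A_{2j})$, $(A_{1i},A_{0k})$, $(A_{0k},A_{2j})$ obtained after a second normalization by $-F_1$. Your proposal handles the easy branch cleanly but defers the entire substance of the theorem to an unsupplied argument; as written, it does not establish that $g=2a-1$ is impossible.
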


\begin{proof}
Let $g$ be the number of groups.
From Theorem \ref{thm3sym1} and Corollary \ref{corcode}, it is enough to show that $g=2a-1$ is not possible. To prove this, consider the following grouping of weight matrices:
\begin{center}
 \begin{tabular}{c|c|c|c|c}
 $I_{n}$ & $F_{1}$ &$ F_{2}$ & \ldots & $F_{2a-2}$\\
 \hline
 $A_{02} $& $A_{12}$ & $A_{22}$ & $\ldots $&$ A_{{(2a-2)}2}$\\
 \hline
 $A_{03} $& $A_{13}$ & $A_{23}$ & $\ldots $&$ A_{{(2a-2)}3}$
 \end{tabular}
 \end{center}

The theorem is proved in the following 5 steps, the proof for all of which is given in Appendix B:
\begin{description}
  \item[Step 1:] ~Finding a relation between coefficients of $A_{1i}$ and $A_{2j}$ ($i,j\in\{2,3\}$) when expanded in terms of the basis of Lemma \ref{SM}.
  \item[Step 2:] ~Finding a relation between coefficients of $A_{0i}$, $A_{1j}$ and $A_{2k}$ ($i,j,k\in\{2,3\}$) when expanded in terms of the basis of Lemma \ref{SM}.
  \item[Step 3:] ~Showing that there is a possibility of 7 types of solutions that takes into account the relations in Step 1.
  \item[Step 4:] ~Showing that after including the relations  from Step 2 also,  there is a possibility of 7 types of solutions.
  \item[Step 5:] ~None of the 7 solutions in Step 4 is possible.
\end{description}

\end{proof}

\begin{theorem}\label{thm4}
For a 4-real symbol decodable $2^{a} \times 2^{a}$ UWD, the rate in cspcu is tightly upper bounded by $\frac{(a-1)}{2^{a-2}}$.
\end{theorem}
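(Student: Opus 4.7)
The plan is to obtain Theorem \ref{thm4} as an almost immediate consequence of Theorem \ref{thmcode} and Theorem \ref{thm3sym2}, since the achievability half is already packaged as the explicit construction in the proof of Theorem \ref{thmcode}. That construction yields a $g = 2a-2$ group, 4-real-symbol decodable $2^a\times 2^a$ UWD of rate $\frac{4(2a-2)}{2\cdot 2^{a}} = \frac{a-1}{2^{a-2}}$ cspcu, so only the converse direction (non-existence of $g \ge 2a-1$ groups) requires separate argument.

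For the upper bound my approach is a ``row-deletion'' reduction to the 3-real-symbol case. Let $\mathcal{S}$ be any $g$-group, 4-real-symbol decodable $2^a \times 2^a$ UWD; after normalization we may assume $A_{01}=I_n$ and the weight matrices $\{A_{ij}\}$ satisfy (\ref{nasc}) and (\ref{nasc1}) for $0\le i\le g-1$, $1\le j\le 4$. Discard the last row of the grouping table, i.e., keep only $\{A_{ij}:0\le i\le g-1,\;1\le j\le 3\}$. These matrices remain unitary (condition (\ref{nasc1}) survives trivially) and $\mathbb{R}$-linearly independent as a subset of a linearly independent set; moreover, the cross-group relations (\ref{nasc}) are indexed by pairs of group labels $i\neq k$ and hold uniformly in $j_1,j_2\in\{1,2,3,4\}$, so they continue to hold when restricted to $j_1,j_2\in\{1,2,3\}$. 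Hence the restriction is itself a $g$-group, 3-real-symbol decodable $2^a \times 2^a$ UWD, and Theorem \ref{thm3sym2} forces $g\le 2(a-1)$.

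Combining the two halves gives the rate of $\mathcal{S}$ at most $\frac{4g}{2\cdot 2^{a}} \le \frac{4(2a-2)}{2^{a+1}} = \frac{a-1}{2^{a-2}}$, with equality attained by the construction of Theorem \ref{thmcode}. The main obstacle is essentially non-existent, because all the real work (the five-step basis-expansion case analysis) was already carried out to establish Theorem \ref{thm3sym2}; the 4-symbol bound piggybacks on it. The only minor point to verify explicitly is that normalization of the 4-symbol code induces a valid normalization of the restricted 3-symbol code, which is immediate since $A_{01}$ lies in the retained first row and left multiplication by $A_{01}^{H}$ acts uniformly on all rows, preserving both (\ref{nascm}) and (\ref{nascm3}) under the restriction.
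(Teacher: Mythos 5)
Your proposal is correct and follows essentially the same route as the paper's Appendix C: achievability from the explicit construction of Theorem \ref{thmcode}, and the converse by observing that deleting one weight matrix per group turns any $g$-group 4-real-symbol decodable UWD into a $g$-group 3-real-symbol decodable one, so Theorem \ref{thm3sym2} forces $g\le 2a-2$. You merely spell out the restriction step (linear independence, survival of conditions (\ref{nasc})--(\ref{nasc1}), normalization) more explicitly than the paper does.
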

\noindent
The proof is given in Appendix C.
\begin{example}\label{example1}
Consider a $4\times 4$ UWD with weight matrices $A_{01}=I_{4}$, $A_{11}=jI_{4}$ and

{\footnotesize
\begin{align*}
 A_{02} = \left[ \begin{array}{rrrr}
  1 & 0 & 0 & 0 \\
  0 & -1 & 0 & 0 \\
  0 & 0 & 1 & 0 \\
  0 & 0 & 0 & -1 \\
\end{array} \right]
 A_{03} =  \left[ \begin{array}{rrrr}
  1 & 0 & 0 & 0 \\
  0 & 1 & 0 & 0 \\
  0 & 0 & -1 & 0 \\
  0 & 0 & 0 & -1 \\
\end{array} \right]
\end{align*}
\begin{align*}
 A_{04} =   \left[ \begin{array}{rrrr}
  1 & 0 & 0 & 0 \\
  0 & -1 & 0 & 0 \\
  0 & 0 & -1 & 0 \\
  0 & 0 & 0 & 1 \\
\end{array} \right],
\end{align*}}
\noindent
with $A_{12}=jA_{02}$, $A_{13}=jA_{03}$ and $A_{14}=jA_{04}$. The codeword matrix $S({\bf x}_{0},{\bf x}_{1})$ is given by (${\bf x}_{i}=[x_0~ x_1~ \cdots x_{\lambda}]$)
\begin{align*}
 S({\bf x}_{0},{\bf x}_{1})=\sum_{i=0}^{1}\sum_{j=1}^{4}x_{ij}{A}_{ij} = \left[ \begin{array}{rrrr}
  Z_{1} & 0 & 0 & 0 \\
  0 & Z_{2} & 0 & 0 \\
  0 & 0 & Z_{3} & 0 \\
  0 & 0 & 0 & Z_{4} \\
\end{array} \right],
\end{align*}
for 4-real symbol decodable UWD and $S({\bf x}_{0},{\bf x}_{1})$ is given by
\begin{align*}
 S({\bf x}_{0},{\bf x}_{1})=\sum_{i=0}^{1}\sum_{j=1}^{3}x_{ij}{A}_{ij} = \left[ \begin{array}{rrrr}
  Z_{5}  & 0 & 0 & 0 \\
  0 & Z_{6} & 0 & 0 \\
  0 & 0 & Z_{7} & 0 \\
  0 & 0 & 0 &  Z_{8}\\
\end{array} \right],
\end{align*}
for 3-real symbol decodable UWD, where,
\begin{align*}
 Z_{1}&=x_{01} +x_{02}+x_{03} +x_{04}+jx_{11} +jx_{12}+jx_{13} +jx_{14}\\
 Z_{2}&=x_{01} -x_{02}+x_{03} -x_{04}+jx_{11} -jx_{12}+jx_{13} -jx_{14}\\
 Z_{3}&=x_{01} +x_{02}-x_{03} -x_{04}+jx_{11} +jx_{12}-jx_{13} -jx_{14}\\
 Z_{4}&=x_{01} -x_{02}-x_{03} +x_{04}+jx_{11} -jx_{12}-jx_{13} +jx_{14}\\
 Z_{5}&=x_{01} +x_{02}+x_{03} +jx_{11} +jx_{12}+jx_{13}\\
 Z_{6}&=x_{01} -x_{02}+x_{03} +jx_{11} -jx_{12}+jx_{13}\\
 Z_{7}&=x_{01} +x_{02}-x_{03}+jx_{11} +jx_{12}-jx_{13}\\
 Z_{8}&=x_{01} -x_{02}-x_{03} +jx_{11} -jx_{12}-jx_{13}.
\end{align*}
 It is easily checked that the weight matrices above  satisfy (\ref{nascm}) to (\ref{nascm3}). So, for this 4-real symbol decodable UWD, rate is 1 cspcu and for 3-real symbol decodable UWD, rate is $\frac{3}{4}$ cspcu.
\end{example}
\begin{example}\label{example2}
Consider a $8\times 8$ UWD with codeword matrix $S({\bf x}_{0},{\bf x}_{1},{\bf x}_{2},{\bf x}_{3})=\sum_{i=0}^{3}\sum_{j=1}^{4}x_{ij}{A}_{ij}$ given by

{\small\begin{align*}
  \left[
  \begin{array}{rrrrrrrr}
    Z_{1} & Z_{2} & Z_{3} & Z_{4} & Z_{5} & Z_{6} & Z_{7} & Z_{8} \\
    Z_{4} & -Z_{3} & -Z_{2} & Z_{1} & Z_{8} & -Z_{7} & -Z_{6} & Z_{5} \\
    Z_{5} & Z_{6} & Z_{7} & Z_{8} & Z_{1} & Z_{2} & Z_{3} & Z_{4} \\
    Z_{8} & -Z_{7} & -Z_{6} & Z_{5} & Z_{4} & -Z_{3} & -Z_{2} & Z_{1} \\
    Z_{2}^{*} & -Z_{1}^{*} & Z_{4}^{*} & -Z_{3}^{*} & Z_{6}^{*} & -Z_{5}^{*} & Z_{8}^{*} & -Z_{7}^{*} \\
    Z_{3}^{*} & Z_{4}^{*} & -Z_{1}^{*} & -Z_{2}^{*} & Z_{7}^{*} & Z_{8}^{*} & -Z_{5}^{*} & -Z_{6}^{*} \\
    Z_{6}^{*} & -Z_{5}^{*} & Z_{8}^{*} & -Z_{7}^{*} & Z_{2}^{*} & -Z_{1}^{*} & Z_{4}^{*} & -Z_{3}^{*} \\
    Z_{7}^{*} & Z_{8}^{*} & -Z_{5}^{*} & -Z_{6}^{*} & Z_{3}^{*} & Z_{4}^{*} & -Z_{1}^{*} & -Z_{2}^{*} \\
  \end{array}
\right].
\end{align*}}
where, $Z_{1}=x_{01}+jx_{11}$, $Z_{2}=x_{21}+jx_{31}$, $Z_{3}=x_{22}+jx_{32}$, $Z_{4}=x_{02}+jx_{12}$, $Z_{5}=x_{03}+jx_{13}$, $Z_{6}=x_{23}+jx_{33}$, $Z_{7}=x_{24}+jx_{34}$ and $Z_{8}=x_{04}+jx_{14}$. By calculating $S^{H}S$, we can easily say that, this design is 4-real symbol decodable UWD. Rate of this design is 1 cspcu. By assigning $x_{i4}=0~(i\in\{0,1,2,3\})$, we get 3-real symbol decodable UWD with rate $\frac{3}{4}$ cspcu.
\end{example}
\section{Diversity and Coding Gain}\label{sec4}
In this Section, we show that for the code shown in Theorem \ref{thmcode}, full diversity is achievable for 4-real symbol decodable STBCs with $n=2^{a}$ antennas.  Also, expressions for coding gain are presented.

Let, ${\bf x}=[x_1~x_{2}~x_{3}~x_{4}]^{T}$ take values from a finite signal set ${\cal{B}.}$ The differential signal set $\bigtriangleup\cal{B}$ of signal set $\cal{B}$ is defined as
\begin{equation*}
{\bigtriangleup\cal{B}}=\{\bigtriangleup {\bf xx}'= {\bf x}-{\bf x}'|{\bf x,x'}\in {\cal{B}}\}.
\end{equation*}
For a differential signal set $\bigtriangleup\cal{B}$, $\forall \bigtriangleup {\bf xx}' \in \bigtriangleup\cal{B}$ Let,
{\small
\begin{align*}
\Psi(\bigtriangleup {\bf xx'} ) = &\left[((\bigtriangleup {\bf xx'})_1+(\bigtriangleup {\bf xx'})_{2}+(\bigtriangleup {\bf xx'})_{3}+(\bigtriangleup {\bf xx'})_{4})^{2}\right]^{\frac{1}{4}}\\
&\left[((\bigtriangleup {\bf xx'})_{1}-(\bigtriangleup {\bf xx'})_{2}+(\bigtriangleup {\bf xx'})_{3}-(\bigtriangleup {\bf xx'})_{4})^{2}\right]^{\frac{1}{4}}\\
 &\left[((\bigtriangleup {\bf xx'})_{1}+(\bigtriangleup {\bf xx'})_{2}-(\bigtriangleup {\bf xx'})_{3}-(\bigtriangleup {\bf xx'})_{4})^{2}\right]^{\frac{1}{4}}\\
 &\left[((\bigtriangleup {\bf xx'})_{1}-(\bigtriangleup {\bf xx'})_{2}-(\bigtriangleup {\bf xx'})_{3}+(\bigtriangleup {\bf xx'})_{4})^{2}\right]^{\frac{1}{4}}.
\end{align*}
}
\begin{theorem}\label{div_thm}
If $\cal{B}$ is the signal set, from which the variables of a group take values from, for the STBC of Theorem \ref{thmcode}, then, the code achieves full-diversity if the differential signal set $\bigtriangleup\cal{B}$ satisfies
\begin{equation}\label{signal_condn}
\Psi(\bigtriangleup {\bf xx'}) >0, ~~~~~~~~\forall~~~ (\bigtriangleup {\bf xx'}\not=0) \in \bigtriangleup\cal{B}.
\end{equation}

\end{theorem}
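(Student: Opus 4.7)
The plan is to exploit the explicit construction from Theorem \ref{thmcode}. For $n=2^a$ and $m=2^{a-2}$, I would first observe that every codeword matrix $S=\sum_{i=0}^{2a-3}\sum_{j=1}^{4}x_{ij}A_{ij}$ is block diagonal with four $m\times m$ diagonal blocks, since $A_{01}=I_n$ and $A_{02},A_{03},A_{04}$ are block diagonal with $\pm I_m$ entries, while $A_{i1}$ is block diagonal with $E_i$ repeated four times and $A_{ij}=A_{0j}A_{i1}$ for $j\ge 2$. A direct expansion shows that block $k$ has the form $S_k=\alpha_0^{(k)}I_m+\sum_{i=1}^{2a-3}\alpha_i^{(k)}E_i$, where $\alpha_i^{(k)}=x_{i1}+s_{2k}x_{i2}+s_{3k}x_{i3}+s_{4k}x_{i4}$ is a real linear combination of the group-$i$ variables, with sign triples $(s_{2k},s_{3k},s_{4k})$ equal to $(1,1,1),(-1,1,-1),(1,-1,-1),(-1,-1,1)$ for $k=1,2,3,4$, as read off from the diagonals of $A_{02},A_{03},A_{04}$.

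The next step is a short algebraic computation using that the $E_i$ are anti-Hermitian, unitary, and pairwise anti-commuting, so $E_i^2=-I_m$ and $E_iE_j+E_jE_i=0$ for $i\ne j$. Expanding $S_k^H S_k$, the cross terms involving $E_iE_j$ with $i\ne j$ cancel in pairs and each diagonal term contributes $-(\alpha_i^{(k)})^2 E_i^2=(\alpha_i^{(k)})^2 I_m$, giving $S_k^H S_k=\bigl(\sum_{i=0}^{2a-3}(\alpha_i^{(k)})^2\bigr)I_m$. The same identity applied to the codeword difference $\Delta S=S-S'$, which is block diagonal with blocks $\Delta S_k$, yields $(\Delta S_k)^H(\Delta S_k)=\bigl(\sum_{i}(\Delta\alpha_i^{(k)})^2\bigr)I_m$, where $\Delta\alpha_i^{(k)}$ is the same linear combination applied to $(\Delta x_{i1},\ldots,\Delta x_{i4})$. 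Hence $\Delta S_k$ is invertible iff some $\Delta\alpha_i^{(k)}\ne 0$, and $\Delta S$ has full rank iff every block does.

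The final step connects this to the hypothesis on $\bigtriangleup\cal{B}$. For a single group $i$, the four quantities $\Delta\alpha_i^{(k)}$, $k=1,2,3,4$, are exactly the four sign-patterned linear combinations whose squared product equals $\Psi(\Delta\mathbf{x}_i)^4$. Thus the hypothesis $\Psi(\Delta\mathbf{x}_i)>0$ for every nonzero $\Delta\mathbf{x}_i\in\bigtriangleup\cal{B}$ says exactly that whenever $\Delta\mathbf{x}_i\ne 0$, all four $\Delta\alpha_i^{(k)}$ are nonzero. Since $\Delta S\ne 0$ forces at least one group $i_0$ to have $\Delta\mathbf{x}_{i_0}\ne 0$, that group alone contributes a strictly positive summand $(\Delta\alpha_{i_0}^{(k)})^2$ to $\sum_{i}(\Delta\alpha_i^{(k)})^2$ for every $k$, so each block is invertible and full diversity follows. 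The only real calculation is the anti-commutation-based cancellation in step two; the remaining steps are bookkeeping made clean by the fact that $\Psi$ is tailor-made to match the four sign patterns arising from the diagonals of $A_{02},A_{03},A_{04}$. As a by-product, the eigenvalues of $(\Delta S)^H\Delta S$ are the $c_k=\sum_i(\Delta\alpha_i^{(k)})^2$, each with multiplicity $m$, which immediately yields a closed-form expression for the coding gain as the minimum of $\bigl(\prod_{k=1}^{4}c_k\bigr)^{1/4}$ over nonzero codeword differences.
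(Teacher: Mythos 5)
Your proof is correct, and it reaches the paper's determinant formula by a different decomposition. The paper works purely with the weight-matrix algebra: it first uses the inter-group anti-commutation conditions to reduce $(\Delta S)^H\Delta S$ to a sum of per-group terms, then expands each term via the identity $A_{ij_1}^HA_{ij_2}+A_{ij_2}^HA_{ij_1}=2A_{0j_1}A_{0j_2}$ into a combination of $I_n$, $A_{02}$, $A_{03}$, $A_{04}$ and their products, and only at the last step invokes the explicit diagonal form of these matrices to read off $\det[(\Delta S)^H\Delta S]$ as the product of the four sign-patterned sums of squares, each raised to the power $n/4$. You instead exploit the explicit construction from the outset: the codeword is block diagonal with four $m\times m$ blocks of the Clifford form $\alpha_0 I_m+\sum_i\alpha_i E_i$, and anti-commutation of the $E_i$ gives $(\Delta S_k)^H\Delta S_k=c_kI_m$ block by block. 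The two routes yield the identical expression (your $c_k$ are exactly the paper's four bracketed sums), but your block-wise organization makes the concluding step cleaner: you deduce full rank directly from invertibility of every block, whereas the paper lower-bounds the determinant by the single-group contribution $\Psi(\Delta\mathbf{x}_{i_0}\mathbf{x}_{i_0}')^n$. Both arguments hinge on the same two facts, the anti-commutation relations and the $\pm1$ sign patterns on the diagonals of $A_{02},A_{03},A_{04}$, so the difference is one of packaging rather than substance; your version has the added benefit of exhibiting the eigenvalues of $(\Delta S)^H\Delta S$ explicitly, which is what the coding-gain computation that follows the theorem needs anyway.
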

\begin{proof}
Proof available in Appendix D.
\end{proof}

Now, we will find the signal set $\cal{B}$, which gives full-diversity for the STBC of Theorem \ref{thmcode}.

Let us define a new vector variable $\bigtriangleup q\triangleq [\bigtriangleup q_{1}~\bigtriangleup q_{2}~\bigtriangleup q_{3}~\bigtriangleup q_{4}]^{T}$ as $\bigtriangleup q = P \bigtriangleup {\bf xx'},$
where
\begin{align*}
 P= \left[ \begin{array}{rrrr}
  1 & 1 & 1 & 1 \\
  1 & -1 & 1 & -1 \\
  1 & 1 & -1 & -1 \\
  1 & -1 & -1 & 1 \\
\end{array} \right].
\end{align*}
 Now, $\bigtriangleup_{min}$ (from  Appendix D) can be re-written as (since $P$ is invertible $\bigtriangleup q=0$ iff $\bigtriangleup {\bf xx'}=0$)
\begin{equation*}
\bigtriangleup_{min}= \min_{\bigtriangleup q \not= 0} [(\bigtriangleup q_{1})^{2}(\bigtriangleup q_{2})^{2}(\bigtriangleup q_{3})^{2}(\bigtriangleup q_{4})^{2}]^{\frac{n}{4}}.
\end{equation*}

 To achieve full diversity we need $\bigtriangleup_{min}>0$. This can be achieved if $\bigtriangleup q_{i}\not=0,~\forall~1 \leq i \leq 4$. And, this can be guaranteed by letting ${\bf x}=[x_{1}~x_{2}~x_{3}~x_{4}]^{T}$ take values from $P^{-1}{\cal{G}}_{4}\mathbb{Z}^{4},$ where, ${\cal{G}}_{4}$ is the generator matrix of a 4-dimensional lattice designed to maximize the product distance \cite{EFE},\cite{FDRV}, and the coding gain $\delta_{min}$ is given by

{\small\begin{equation}
\label{codingain}\delta_{min}=(\bigtriangleup_{min})^{\frac{1}{n}}= \min_{\bigtriangleup q \not= 0} [(\bigtriangleup q_{1})^{2}(\bigtriangleup q_{2})^{2}(\bigtriangleup q_{3})^{2}(\bigtriangleup q_{4})^{2}]^{\frac{1}{4}}.
\end{equation}}
\noindent
The right hand side of (\ref{codingain}) can be obtained from \cite{EFE}, \cite{FDRV}.

\subsection{Calculation of Diversity and Coding gain with examples}
Let, the signal set $\cal{B}$ be obtained for $[\pm1~\pm1~\pm1~\pm1]^{T}\in\mathbb{Z}^{4}$ from $P^{-1}{\cal{G}}_{4}\mathbb{Z}^{4}$. Here, $P^{-1}=\frac{1}{4}P$ and from \cite{FDRV}

{\small\begin{align*}
{\cal{G}}_{4}=\left[
  \begin{array}{rrrr}
    -0.3664 & -0.7677 & 0.4231 & 0.3121 \\
    -0.2264 & -0.4745 & -0.6846 & -0.5050 \\
    -0.4745 & 0.2264 & -0.5050 & 0.6846 \\
    -0.7677 & 0.3664 & 0.3121 & -0.4231 \\
  \end{array}
\right].
\end{align*}}

Let ${\bf x} \in \cal{B}$, be written as ${\bf x}=eP^{-1}{\cal{G}}_{4}{\bf z}$, for some ${\bf z}\in [\pm1~\pm1~\pm1~\pm1]^{T}$, where, $e$ is used for normalizing the average energy. Define $E({\bf x})=||{\bf x}||^{2}=x_{1}^{2}+x_{2}^{2}+x_{3}^{2}+x_{4}^{2}$. Then
\begin{align}
\nonumber E({\bf x})=&\frac{1}{4}e^{2}{\bf z}^{T}{\cal{G}}_{4}^{T}PP^{-1}{\cal{G}}_{4}{\bf z}=\frac{1}{4}e^{2}{\bf z}^{T}{\bf z}
=e^{2}.
\end{align}
Here, we used the fact that ${\cal{G}}_{4}^{T}{\cal{G}}_{4}=I_{4}$ from \cite{FDRV} and ${\bf z}^{T}{\bf z}=4$. So, $E({\bf x})=e^{2}=E$ for all ${\bf x}$ such that ${\bf x}\in \cal{B}$.

Let, $\bigtriangleup u={\cal{G}}_{4}\bigtriangleup v$, where $\bigtriangleup u=[\bigtriangleup u_{1}~\bigtriangleup u_{2}~\bigtriangleup u_{3}~\bigtriangleup u_{4}]$ and $\bigtriangleup v=[\bigtriangleup v_{1}~\bigtriangleup v_{2}~\bigtriangleup v_{3}~\bigtriangleup v_{4}]$. Let $\bigtriangleup v_{i}\in\{-2,0,2\}$. Define,
\begin{align*}
\delta_{u}\triangleq\min_{\bigtriangleup u \not= 0} [(\bigtriangleup u_{1})^{2}(\bigtriangleup u_{2})^{2}(\bigtriangleup u_{3})^{2}(\bigtriangleup u_{4})^{2}]^{\frac{1}{4}}=0.6503.
\end{align*}

Now, the following example calculates the coding gains for a $4\times4$ and $8\times8$, 4-real symbol decodable Unitary-weight STBCs of Example \ref{example1} and  Example \ref{example2}.
\begin{example}\label{example3}
For the codeword $S$ of Example \ref{example1} (for 4-real symbol decodable UWD), average energy $E_{avg}$ is given by
\begin{align*}
E_{avg}=\frac{1}{16}\mathbb{E}(||S||_{F}^{2})=&\frac{1}{16}\mathbb{E}(|Z_{1}|^{2}+|Z_{2}|^{2}+|Z_{3}|^{2}+|Z_{4}|^{2})\\
=&\frac{1}{4}\mathbb{E}(||{\bf x}_{0}||^{2}+||{\bf x}_{1}||^{2})=\frac{E}{2},
\end{align*}
 where $\mathbb{E}$ is over all possible information symbols. For $E_{avg}=1$ to be satisfied, $E=2$ and $e=\sqrt{2}$.
From (\ref{codingain}), coding gain is given by ${e}^{2}\delta_{u}=1.3006=1.1414dB$. Since $\delta_{min}>0$, this STBC has full diversity.

For the codeword $S$ of Example \ref{example2}, the  average energy $E_{avg}$ is given by
{\small\begin{align*}
E_{avg}=&\frac{1}{64}\mathbb{E}(||S||_{F}^{2})=\frac{1}{8}\mathbb{E}(\sum_{i=1}^{8} |Z_{i}|^{2}) = \frac{1}{8}\mathbb{E}(\sum_{j=0}^{3}||{\bf x}_{j}||^{2})=\frac{1}{2}E,
\end{align*}}
 where $\mathbb{E}$ is over all possible information symbols. For $E_{avg}=1$ to be satisfied, $E={2}$ and $e=\sqrt{2}$.

Let, for a complex number $Z_{i}=\Re\{Z_{i}\}+j\Im\{Z_{i}\}$, $\bigtriangleup Z_{i} =\Re\{ \bigtriangleup Z_{i}\}+j\Im\{\bigtriangleup Z_{i}\}$, where $\Re\{Z_{i}\}$, $\Im\{Z_{i}\}$ are the real and imaginary parts of $Z_{i}$ respectively. To find coding gain we need to find $det[(\bigtriangleup {S})^{H}(\bigtriangleup {S})]=det[(\bigtriangleup {S})(\bigtriangleup {S})^{H}]$.
\begin{align*}
(\bigtriangleup {S})(\bigtriangleup {S})^{H}= \left [\begin{array}{cc}
                                                \Phi(t,\alpha,\beta,\gamma) & 0_{4} \\
                                                0_{4} & \Phi(t,-\alpha,\beta,-\gamma)                                              \end{array} \right ].
\end{align*}
Here,
\begin{align*}
 \Phi(t,\alpha,\beta,\gamma)= \left[ \begin{array}{rrrr}
  t & \alpha & \beta & \gamma \\
  \alpha & t & \gamma & \beta \\
  \beta & \gamma & t & \alpha \\
  \gamma & \beta & \alpha & t \\
\end{array} \right],
\end{align*}
where,
{\small\begin{align*}
t&=\sum_{i=1}^{8} |\bigtriangleup Z_{i}|^{2}, \\
\alpha&=2\Re\{\bigtriangleup Z_{1} \bigtriangleup Z_{4}^{*} - \bigtriangleup Z_{2}\bigtriangleup Z_{3}^{*}+ \bigtriangleup Z_{5}\bigtriangleup Z_{8}^{*} - \bigtriangleup Z_{6}\bigtriangleup Z_{7}^{*}\},\\
\beta&=2\Re\{\bigtriangleup Z_{1}\bigtriangleup Z_{5}^{*} + \bigtriangleup Z_{2}\bigtriangleup Z_{6}^{*} + \bigtriangleup Z_{3}\bigtriangleup Z_{7}^{*}+\bigtriangleup Z_{4}\bigtriangleup Z_{8}^{*}\},\\
\gamma&=2\Re\{\bigtriangleup Z_{1}\bigtriangleup Z_{8}^{*} - \bigtriangleup Z_{2}\bigtriangleup Z_{7}^{*}-\bigtriangleup Z_{3}\bigtriangleup Z_{6}^{*}+\bigtriangleup Z_{4}\bigtriangleup Z_{5}^{*}\}.
\end{align*}}
\begin{align*}
det[\Phi(t,\alpha,\beta,\gamma)]=&(t+\gamma+\alpha+\beta)(t+\gamma-\alpha-\beta)\\
&(t-\gamma+\alpha-\beta)(t-\gamma-\alpha+\beta)\\
=&det[\Phi(t,-\alpha,\beta,-\gamma)]
\end{align*}
Now, 
\begin{align*}
det[(\bigtriangleup {S})^{H}(\bigtriangleup {S})]=& det[\Phi(t,\alpha,\beta,\gamma)]\times det[\Phi(t,-\alpha,\beta,-\gamma)]\\
 =&(det[\Phi(t,\alpha,\beta,\gamma)])^{2}.
\end{align*}
As in (\ref{divdet}), here too $det[(\bigtriangleup {S})^{H}(\bigtriangleup {S})]$ is a product of sum of squares of real numbers, so,
\begin{align*}
 &det[(\bigtriangleup {S})^{H}(\bigtriangleup {S})]\geq \\&
 \left[((\bigtriangleup {\bf x}_{i}{\bf x}_{i}')_{1}+(\bigtriangleup {\bf x}_{i}{\bf x}_{i}')_{2}+(\bigtriangleup {\bf x}_{i}{\bf x}_{i}')_{3}+(\bigtriangleup {\bf x}_{i}{\bf x}_{i}')_{4})\right]^{4}\\
 &\left[((\bigtriangleup {\bf x}_{i}{\bf x}_{i}')_{1}-(\bigtriangleup {\bf x}_{i}{\bf x}_{i}')_{2}+(\bigtriangleup {\bf x}_{i}{\bf x}_{i}')_{3}-(\bigtriangleup {\bf x}_{i}{\bf x}_{i}')_{4})\right]^{4}\\
&\left[((\bigtriangleup {\bf x}_{i}{\bf x}_{i}')_{1}+(\bigtriangleup {\bf x}_{i}{\bf x}_{i}')_{2}-(\bigtriangleup {\bf x}_{i}{\bf x}_{i}')_{3}-(\bigtriangleup {\bf x}_{i}{\bf x}_{i}')_{4})\right]^{4}\\
&\left[((\bigtriangleup {\bf x}_{i}{\bf x}_{i}')_{1}-(\bigtriangleup {\bf x}_{i}{\bf x}_{i}')_{2}-(\bigtriangleup {\bf x}_{i}{\bf x}_{i}')_{3}+(\bigtriangleup {\bf x}_{i}{\bf x}_{i}')_{4}) \right]^{4},
\end{align*}
for some $0\leq i \leq 3$. And, $\bigtriangleup_{min}$ occurs when all but one among $\bigtriangleup {\bf x}_{i}{\bf x}_{i}',~0\leq i \leq 3$ are zeros. So, from (\ref{codingain}), coding gain is given by ${e}^{2}\delta_{u}=1.3006=1.1414dB$.
 Since $\delta_{min}>0$, this STBC has full diversity.
\end{example}

\section{Discussion and Concluding remarks}\label{sec5}
In this paper, we have shown that, the maximum rate achieved by 3- and 4-real symbol  ML decodable $2^{{a}}\times2^{{a}}$ UWDs is $\frac{3(a-1)}{2^{a}}$ and $\frac{4(a-1)}{2^{a}}$ cspcu respectively. We have also given a \textit{STBC} which achieves the maximum rate. And, also shown that this \textit{STBC} can achieve full-diversity for rotated lattice constellations. Possible directions for further research are:
\begin{enumerate}
  \item A general upper bound on the rate of the $\lambda$ real symbol decodable UWDs is yet to be found.
  \item Even though maximum rate possible for $g$-group decodable CUWDs is found, it is not found for general UWDs.
\end{enumerate}
These could possibly be the future direction of research.
\section*{acknowledgement}
The authors wish to thank K. Pavan Srinath for his useful reviews on proofs of the theorems. We also wish to thank G. Abhinav for proof reading the paper. This work was supported partly by the DRDO-IISc program on Advanced Research in Mathematical Engineering through a research grant and partly by the INAE Chair Professorship grant to B. S. Rajan.

\newpage

\section*{Appendix A}
\begin{center}
{\bf Proof of Theorem \ref{thm3sym1}}
\end{center}
Let $g$ be the number of groups. The rate is given by $\frac{3g}{2^{a+1}}$ cspcu. It is enough to show that $g$ is upper bounded by $2a-1$. If $g$ is more than $2a$ we can remove one weight matrix from each group and get 2-real symbol  decodable UWD, which contradicts Lemma \ref{thmssd}.  So, $g \leq 2a$. To prove that $g$ cannot be $2a$ let us consider the following weight matrices,
\begin{center}
 \begin{tabular}{c|c|c|c|c}
 $I_{n}$ & $F_{1}$ & $F_{2}$ & \ldots & $F_{2a-1}$\\
 \hline
 $A_{02} $& $A_{12}$ & $A_{22}$ & $\ldots $&$ A_{{(2a-1)}2}$\\
 \hline
 $A_{03} $& $A_{13}$ & $A_{23}$ & $\ldots $&$ A_{{(2a-1)}3}$
 \end{tabular}
 \end{center}
where $A_{01}=I_{n}$ and $A_{j1}=F_{j}$ ($ j=1,2,\cdots,2a-1$).
From Case-3 in the proof of Theorem 1 of \cite{KSR}, $A_{j2}$ can be written as $A_{j2} = \pm m \prod_{i=1,i\not=j}^{2a-1} F_{i}, $ for $ j=1,2,\cdots,2a-1$. Hence, $A_{j3}$ must be equal to $a_{j,1}F_{j}+a_{j,2}F_{2a}+a_{j,4}F_{1}F_{2}\cdots F_{2a}$. But these $A_{j3}$s violate linear independence of weight matrices, so we cannot have $A_{j3}$s for $g=2a$. This completes the proof.

\section*{Appendix B}
\begin{center}
{\bf Proof of Theorem \ref{thm3sym2}}
\end{center}

\textbf{Step 1}:

Let $A_{ik}=\sum_{j=1}^{2^{2a}}a_{ikj}F_{1}^{\lambda_{1,j}}F_{2}^{\lambda_{2,j}}\cdots F_{2a}^{\lambda_{2a,j}},~  \lambda_{m,j}\in \{0,1\},~ m=1,2,\cdots,2a,~ i=0,1,2,\cdots,2a-2,~k=2,3$ and $a_{ikj}\in \mathbb{C}$. This is possible because of Lemma \ref{SM}. Considering $A_{1k}$, since $A_{1k}$ anti-commutes with $F_{2}$, $F_{3}$, $ \cdots$, $F_{2a-2}$, every individual term of $A_{1k}$ must anti-commute with $F_{2}$, $ F_{3}$, $\cdots$, $ F_{2a-2}$. The only matrices from the set $\{F_{1}^{\lambda_{1}}F_{2}^{\lambda_{2}}\cdots F_{2a}^{\lambda_{2a}},~ \lambda_{i}\in \{0,1\} ,~ i=1,2,\cdots,2a\}$ that anti-commute with $F_{2}, F_{3}, \cdots, F_{2a-2}$ are $F_{1}$, $ F_{2a-1}$, $ F_{2a}$, $F_{1}F_{2a-1}F_{2a}$, $F_{1}F_{2}\cdots F_{2a-2}$, $F_{2}F_{3}\cdots F_{2a-1}$, $F_{2}F_{3}\cdots F_{2a-2}F_{2a}$, and $F_{1}F_{2}\cdots F_{2a}$. Hence, let (for $k \in \{2,3\}$)
{
\begin{align*}
A_{1k} = & a_{k1}F_{1}+a_{k2}F_{2a-1}+a_{k3}F_{2a}+a_{k4}F_{1}F_{2a-1}F_{2a}\\
&+a_{k5}F_{1}F_{2}\cdots F_{2a-2}+a_{k6}F_{2}F_{3}\cdots F_{2a-1}\\
&+a_{k7}F_{2}F_{3}\cdots F_{2a-2}F_{2a}+a_{k8}F_{1}F_{2}\cdots F_{2a}.
\end{align*}}
 Similarly, let (for $k \in \{2,3\}$)
{ \begin{align}
\nonumber A_{2k} = & c_{k1}F_{2}+c_{k2}F_{2a-1}+c_{k3}F_{2a}+c_{k4}F_{1}F_{2a-1}F_{2a}\\
\nonumber &+c_{k5}F_{1}F_{2}\cdots F_{2a-2}+c_{k6}F_{1}F_{3}\cdots F_{2a-1}\\
\label{b_rep} &+c_{k7}F_{1}F_{3}\cdots F_{2a-2}F_{2a}+c_{k8}F_{1}F_{2}\cdots F_{2a}.
 \end{align}}

Let, $(F_{1}F_{2}\cdots F_{2a})^{2}=pI$, where $p=-1$ when, $a$ is odd and $p=1$ when, $a$ is even. Further, from (\ref{nascm}), $A_{1k}^{2}=A_{2k}^{2}=-I_{n}$ and from the above equations we have
{\small
\begin{align*}
A_{1k}^2=&-(a_{k1}^{2}+a_{k2}^{2}+a_{k3}^{2}-a_{k4}^{2}+p(a_{k5}^{2}+a_{k6}^{2}+a_{k7}^{2}-a_{k8}^{2}))I_{n}\\
&-2(a_{k1}a_{k4}+pa_{k5}a_{k8})F_{2a-1}F_{2a}\\
&+2(a_{k1}a_{k6}+a_{k2}a_{k5})F_{1}\cdots F_{2a-1}\\
&+2(a_{k1}a_{k7}+a_{k3}a_{k5})F_{1}\cdots F_{2a-2}F_{2a}\\
&+ 2(a_{k2}a_{k4}-pa_{k6}a_{k8})F_{1}F_{2a}\\
&-2(a_{k2}a_{k7}-a_{k3}a_{k6})F_{2}F_{3}\cdots F_{2a}\\
&-2(a_{k3}a_{k4}-pa_{k7}a_{k8})F_{1}F_{2a-1} ,\\
A_{2k}^{2}=&-(c_{k1}^{2}+c_{k2}^{2}+c_{k3}^{2}-c_{k4}^{2}+p(c_{k5}^{2}+c_{k6}^{2}+c_{k7}^{2}-c_{k8}^{2}))I_{n}\\
&-2(c_{k1}c_{k4}+pc_{k5}c_{k8})F_{2a-1}F_{2a}\\
&-2(c_{k1}c_{k6}-c_{k2}c_{k5})F_{1}\cdots F_{2a-1}\\
&-2(c_{k1}c_{k7}-c_{k3}c_{k5})F_{1}\cdots F_{2a-2}F_{2a}\\
&+ 2(c_{k2}c_{k4}+pc_{k6}c_{k8})F_{2}F_{2a}\\
&-2(c_{k2}c_{k7}-c_{k3}c_{k6})F_{1}F_{3}\cdots F_{2a-1}F_{2a}\\
&-2(c_{k3}c_{k4}+pc_{k7}c_{k8})F_{2}F_{2a-1}.
\end{align*}}

Since $I_{n},$ $F_{2a-1}F_{2a},$ $F_{1}F_{2}\cdots F_{2a-1}$, $F_{1}F_{2}\cdots F_{2a-2}F_{2a},$ $F_{1}F_{2a}$, $F_{2}F_{3}\cdots F_{2a},$ $F_{1}F_{2a-1},$ $F_{2}F_{2a},$ $F_{1}F_{3}\cdots F_{2a}$ and $F_{2}F_{2a-1}$ are linearly independent over $\mathbb{C}$, the following equations have to be satisfied.

{\small
\begin{equation}\label{aci1}
\left.
\begin{aligned}
&a_{k1}^{2}+a_{k2}^{2}+a_{k3}^{2}-a_{k4}^{2}+p(a_{k5}^{2}+a_{k6}^{2}+a_{k7}^{2}-a_{k8}^{2})=1;\\
&c_{k1}^{2}+c_{k2}^{2}+c_{k3}^{2}-c_{k4}^{2}+p(c_{k5}^{2}+c_{k6}^{2}+c_{k7}^{2}-c_{k8}^{2})=1;\\
\end{aligned}
\right\}
\end{equation}
}

\begin{equation}\label{aci2}
\left.
\begin{aligned}
&~~~~~~a_{k1}a_{k4}+pa_{k5}a_{k8}=0;~a_{k1}a_{k6}+a_{k2}a_{k5}=0;\\
&~~~~~~a_{k1}a_{k7}+a_{k3}a_{k5}=0;~a_{k2}a_{k4}-pa_{k6}a_{k8}=0;\\
&~~~~~~a_{k2}a_{k7}-a_{k3}a_{k6}=0;~a_{k3}a_{k4}-pa_{k7}a_{k8}=0;\\
&~~~~~~c_{k1}c_{k4}+pc_{k5}c_{k8}=0;~c_{k1}c_{k6}-c_{k2}c_{k5}=0;\\
&~~~~~~c_{k1}c_{k7}-c_{k3}c_{k5}=0;~c_{k2}c_{k4}+pc_{k6}c_{k8}=0;\\
&~~~~~~c_{k2}c_{k7}-c_{k3}c_{k6}=0;~c_{k3}c_{k4}+pc_{k7}c_{k8}=0.
\end{aligned}
\right\}
\end{equation}

Since $A_{ik}^{H}=-A_{ik}, ~i\in\{1,2\} ~k\in\{2,3\}$, we need, $a_{k1},~a_{k2},~a_{k3},~c_{k1},~c_{k2},~c_{k3} \in \mathbb{R}$, $a_{k4},~c_{k4} \in img(\mathbb{C})$. Also if $p=1,$ $a_{k5},~a_{k6},~a_{k7},~c_{k5},~c_{k6},~c_{k7} \in \mathbb{R}$, $a_{k8}\in img(\mathbb{C}),~c_{k8} \in img(\mathbb{C})$ and if $p=-1$, $a_{k5},~a_{k6},~a_{k7},~c_{k5},~c_{k6},~c_{k7} \in img(\mathbb{C})$, $a_{k8}\in \mathbb{R}, ~c_{k8} \in \mathbb{R}$.

For $A_{1i}$ and $A_{2j}$ ($i,j\in\{2,3\}$) to anti-commute the following conditions need to be satisfied
(we get these conditions by equating $A_{1i}A_{2j}+A_{2j}A_{1i}=0$ and using the linear independence condition over
$\mathbb{C}$).

{\small
\begin{equation}\label{aci4}
\left.
\begin{aligned}
&a_{i2}c_{j4}+pa_{i8}c_{j6}=0;~a_{i2}c_{j5}+a_{i5}c_{j2}=0;~a_{i2}c_{j7}-a_{i3}c_{j6}=0;\\
&a_{i3}c_{j4}+pa_{i8}c_{j7}=0;~a_{i3}c_{j5}+a_{i5}c_{j3}=0;~a_{i4}c_{j2}-pa_{i6}c_{j8}=0;\\
&a_{i4}c_{j3}-pa_{i7}c_{j8}=0;~a_{i4}c_{j6}+a_{i6}c_{j4}=0;~a_{i4}c_{j7}+a_{i7}c_{j4}=0;\\
&a_{i5}c_{j8}+a_{i8}c_{j5}=0;~a_{i6}c_{j3}-a_{i7}c_{j2}=0;~a_{i6}c_{j7}-a_{i7}c_{j6}=0;\\
&~~~~~~~a_{i2}c_{j2}+a_{i3}c_{j3}+pa_{i5}c_{j5}-pa_{i8}c_{j8}=0.
\end{aligned}
\right\}
\end{equation}
}
\textbf{Step 2}:

From Lemma \ref{normal}, by multiplying all the weight matrices by $-A_{11}$ (i.e., $-F_{1}$), we get another equivalent UWD with weight matrices grouped as shown at the top of the next page, after interchanging the first and the second columns.

\begin{figure*}
{\small \begin{center}
 \begin{tabular}{c|c|c|c|c}
 $I_{n}$ & $A_{11}'=-F_{1}$&$A_{21}'=-F_{1}F_{2}$&\ldots & $A_{(2a-2)1}'=-F_{1}F_{2a-2}$\\
 \hline
 $A_{02}'=-F_{1}A_{12}$&$A_{12}'=-F_{1}A_{02}$&$A_{22}'=-F_{1}A_{22}$ & \ldots & $A_{(2a-2)2}'=-F_{1}A_{(2a-2)2}$\\
 \hline
 $A_{03}'=-F_{1}A_{13}$&$A_{13}'=-F_{1}A_{03}$&$A_{23}'=-F_{1}A_{23}$ & \ldots & $A_{(2a-2)3}'=-F_{1}A_{(2a-2)3}$
 \end{tabular}
  \hrule
 \end{center}}
\end{figure*}

It should be noted that in the first row, except $I_{n}$, all are mutually anti-commuting matrices and all of them also anti-commute with $F_{1}F_{2a-1}$ and $F_{1}F_{2a}$. So, -$F_{1}$, -$F_{1}F_{2}$, -$F_{1}F_{3}$, $\cdots$, -$F_{1}F_{2a-1}$ and -$F_{1}F_{2a}$ are $2a$ pairwise anti-commuting matrices.

From (\ref{nascm3}), $A_{1k}'~(k\in\{2,3\})$ has to be anti-commuting with $A_{21}'$, $A_{31}'$, $\cdots$, $A_{(2a-2)1}'$. The only matrices from the set $\{F_{1}^{\lambda_{1}}F_{2}^{\lambda_{2}}\cdots F_{2a}^{\lambda_{2a}}, \lambda_{i}\in \{0,1\} , i=1,2,\cdots,2a\}$ that anti-commute with $F_{1}F_{2}$, $F_{1}F_{3}$, $\cdots$ and $F_{1}F_{2a-2}$ are $F_{1},$ $F_{1}F_{2a-1},$ $F_{1}F_{2a},$ $F_{1}F_{2a-1}F_{2a}$, $F_{2}\cdots F_{2a-2}$, $F_{2}F_{3}\cdots F_{2a-1}$, $F_{2}F_{3}\cdots F_{2a-2}F_{2a}$, $F_{2}F_{3}\cdots F_{2a}$. Hence, let (for $k\in\{2,3\}$)
 \begin{align*}
 A_{1k}' =& b_{k1}F_{1}+b_{k2}F_{1}F_{2a-1}+b_{k3}F_{1}F_{2a}+b_{k4}F_{1}F_{2a-1}F_{2a}\\
 &+b_{k5}F_{2}\cdots F_{2a-2}+b_{k6}F_{2}F_{3}\cdots F_{2a-1}\\
 &+b_{k7}F_{2}F_{3}\cdots F_{2a-2}F_{2a}+b_{k8}F_{2}F_{3}\cdots F_{2a}.
\end{align*}
From (\ref{nascm}), ${(A_{1k}')}^{2}$=$-I_{n}$ and we have

{\small\begin{align*}
{(A_{1k}')}^{2}=&-(b_{k1}^{2}+b_{k2}^{2}+b_{k3}^{2}-b_{k4}^{2}+p(b_{k5}^{2}+b_{k6}^{2}+b_{k7}^{2}-b_{k8}^{2}))I_{n}\\
&-2(b_{k1}b_{k4}+p b_{k5}b_{8})F_{2a-1}F_{2a}\\
&- 2(b_{k2}b_{k4}+p b_{k6}b_{k8})F_{2a}\\
&+2(b_{k1}b_{k6}-b_{k2}b_{k5})F_{1}F_{2}\cdots F_{2a-1}\\
&+2(b_{k1}b_{k7}-b_{k3}b_{k5})F_{1}F_{2}\cdots F_{2a-2}F_{2a}\\
&-2(b_{k2}b_{k7}-b_{k3}b_{k6})F_{1}F_{2}F_{3}\cdots F_{2a}\\
&+2(b_{k3}b_{k4}+p b_{k7}b_{k8})F_{2a-1}.
\end{align*}}
Since, $I_{n},$ $F_{2a-1}F_{2a},$ $F_{1}F_{2}\cdots F_{2a-1}$, $F_{1}F_{2}\cdots F_{2a-2}F_{2a},$ $F_{2a}$,
$F_{1}F_{2}F_{3}\cdots F_{2a},$ $F_{2a-1}$ are linearly independent over $\mathbb{C}$ the following equations have to be satisfied.

{\small\begin{align}
\label{bc1}&b_{k1}^{2}+b_{k2}^{2}+b_{k3}^{2}-b_{k4}^{2}+p(b_{k5}^{2}+b_{k6}^{2}+b_{k7}^{2}-b_{k8}^{2})=1;\\
\nonumber&b_{k1}b_{k4}+pb_{k5}b_{k8}=0;~b_{k1}b_{k6}-b_{k2}b_{k5}=0;~b_{k1}b_{k7}-b_{k3}b_{k5}=0;\\
\nonumber&b_{k2}b_{k4}+pb_{k6}b_{k8}=0;~b_{k2}b_{k7}-b_{k3}b_{k6}=0;~b_{k3}b_{k4}+pb_{k7}b_{k8}=0.
\end{align}}

Since $A_{1k}'^{H}=-A_{1k}'~(k\in\{2,3\})$, we need $b_{k1},~b_{k2},~b_{k3} \in \mathbb{R}$, $b_{k4}\in img(\mathbb{C})$. If $p=1$, $b_{k5},~b_{k6},~b_{k7} \in \mathbb{R}$, $b_{k8} \in img(\mathbb{C})$ and if $p=-1$, $b_{k5},~b_{k6},~b_{k7} \in img(\mathbb{C})$, $b_{k8} \in \mathbb{R}$.

For $A_{1k}'$ and $A_{2j}'$ $(j,k\in\{2,3\})$ to anti-commute the following conditions need to be satisfied
(we get these conditions by equating $A_{1k}'A_{2j}'+A_{2j}'A_{1k}'=0$ and using the linear independence condition over $\mathbb{C}$)

{\small\begin{align}
\nonumber&b_{k2}c_{j4}-pb_{k8}c_{j6}=0;~b_{k2}c_{j5}-b_{k5}c_{j2}=0;~b_{k2}c_{j7}-b_{k3}c_{j6}=0;\\
\nonumber&b_{k3}c_{j4}-pb_{k8}c_{j7}=0;~b_{k3}c_{j5}-b_{k5}c_{j3}=0;~b_{k4}c_{j2}-pb_{k6}c_{j8}=0;\\
\nonumber&b_{k4}c_{j3}-pb_{k7}c_{j8}=0;~b_{k4}c_{j6}+b_{k6}c_{j4}=0;~b_{k4}c_{j7}+b_{k7}c_{j4}=0;\\
\nonumber&b_{k5}c_{j8}+b_{k8}c_{j5}=0;~~b_{k6}c_{j3}-b_{k7}c_{j2}=0;~b_{k6}c_{j7}-b_{k7}c_{j6}=0;\\
\label{bc2}&b_{k2}c_{j2}+b_{k3}c_{j3}-pb_{k5}c_{j5}+pb_{k8}c_{j8}=0.
\end{align}}

Similarly, by equating $A_{1k}'^{H}A_{0i}'+A_{0i}'^{H}A_{1k}'=0$ $(i,k\in\{2,3\})$ and using the linear independence condition over $\mathbb{C}$ we get

{\small\begin{align}
\nonumber&b_{k2}a_{i4}+pb_{k8}a_{i6}=0;~b_{k2}a_{i5}-b_{k5}a_{i2}=0;~b_{k2}a_{i7}-b_{k3}a_{i6}=0;\\
\nonumber&b_{k3}a_{i4}+pb_{k8}a_{i7}=0;~b_{k3}a_{i5}-b_{k5}a_{i3}=0;~b_{k4}a_{i2}-pb_{k6}a_{i8}=0;\\
\nonumber&b_{k4}a_{i3}-pb_{k7}a_{i8}=0;~b_{k4}a_{i6}-b_{k6}a_{i4}=0;~b_{k4}a_{i7}-b_{k7}a_{i4}=0;\\
\nonumber&b_{k5}a_{i8}+b_{k8}a_{i5}=0;~~b_{k6}a_{i3}-b_{k7}a_{i2}=0;~b_{k6}a_{i7}-b_{k7}a_{i6}=0;\\
\label{ab1}&b_{k2}a_{i2}+b_{k3}a_{i3}-pb_{k5}a_{i5}+pb_{k8}a_{i8}=0.
\end{align}}

\textbf{Step 3}:

The conditions (\ref{aci2}) to (\ref{ab1}) can also be re-written as
\begin{align}
\nonumber&\frac{a_{i1}}{a_{i5}}=\frac{-pa_{i8}}{a_{i4}}=\frac{-a_{i2}}{a_{i6}}=\frac{-a_{i3}}{a_{i7}}=z_{i1}\\
\nonumber&\frac{c_{j1}}{c_{j5}}=\frac{-pc_{j8}}{c_{j4}}=\frac{c_{j2}}{c_{j6}}=\frac{c_{j3}}{c_{j7}}=z_{j2}\\
\label{abc}&\frac{b_{k1}}{b_{k5}}=\frac{-pb_{k8}}{b_{k4}}=\frac{b_{k2}}{b_{k6}}=\frac{b_{k3}}{b_{7}}=z_{k9}
\end{align}

The relations between the coefficients in the representation of $A_{1i}$ and $A_{2j},~~i,j \in \{2,3 \}$ are
\begin{align}
\nonumber&\frac{a_{i2}}{pa_{i8}}=\frac{a_{i6}}{a_{i4}}=\frac{-c_{j6}}{c_{j4}}=\frac{c_{j2}}{pc_{j8}}=z_{3};~
\frac{a_{i2}}{a_{i5}}=\frac{-c_{j2}}{c_{j5}}=z_{4};\\
\nonumber&\frac{a_{i2}}{a_{i3}}=\frac{a_{i6}}{a_{i7}}=\frac{c_{j6}}{c_{j7}}=\frac{c_{j2}}{c_{j3}}=z_{5};~
~~~~~\frac{a_{i5}}{pa_{i8}}=\frac{-c_{j5}}{pc_{j8}}=z_{8};\\
\nonumber&\frac{a_{i7}}{a_{i4}}=\frac{a_{i3}}{pa_{i8}}=\frac{-c_{j7}}{c_{j4}}=\frac{c_{j3}}{pc_{j8}}=z_{6};~
\frac{a_{i3}}{a_{i5}}=-\frac{c_{j3}}{c_{j5}}=z_{7};\\
\label{aci5}&a_{i2}c_{j2}+a_{i3}c_{j3}+pa_{i5}c_{j5}-pa_{i8}c_{j8}=0.
\end{align}

The relations between the coefficients in the representation of $A_{1i}$ and $A_{0k}, (i.e., A^\prime_{0i} \mbox{ and }A^\prime_{1k})~~i,k \in \{2,3 \}$ are

\begin{align}
\nonumber&\frac{a_{i2}}{pa_{i8}}=\frac{a_{i6}}{a_{i4}}=\frac{b_{k6}}{b_{k4}}=\frac{-b_{k2}}{pb_{k8}}=z_{10};~
\frac{a_{i2}}{a_{i5}}=\frac{b_{k2}}{b_{k5}}=z_{11};\\
\nonumber&\frac{a_{i2}}{a_{i3}}=\frac{a_{i6}}{a_{i7}}=\frac{b_{k6}}{b_{k7}}=\frac{b_{k2}}{b_{k3}}=z_{12};~
~~~\frac{a_{i5}}{pa_{i8}}=\frac{-b_{k5}}{pb_{k8}}=z_{15};\\
\nonumber&\frac{a_{i7}}{a_{i4}}=\frac{a_{i3}}{pa_{i8}}=\frac{b_{k7}}{b_{k4}}=\frac{-b_{k3}}{pb_{k8}}=z_{13};~
\frac{a_{i3}}{a_{i5}}=\frac{b_{k3}}{b_{k5}}=z_{14};\\
\label{ab3}&a_{i2}b_{k2}+a_{i3}b_{k3}-pa_{i5}b_{k5}+pa_{i8}b_{k8}=0.
\end{align}

The relations between the coefficients in the representation of $A_{0k}$ and $A_{2j}, (i.e., A^\prime_{1k} \mbox{ and }A^\prime_{2j})~~i,k \in \{2,3 \}$ are

\begin{align}
\nonumber&\frac{b_{k2}}{pb_{k8}}=\frac{-b_{k6}}{b_{k4}}=\frac{c_{j6}}{c_{j4}}=\frac{-c_{j2}}{pc_{j8}}=z_{16};~
\frac{b_{k2}}{b_{k5}}=\frac{c_{j2}}{c_{j5}}=z_{17};\\
\nonumber&\frac{b_{k2}}{b_{k3}}=\frac{b_{k6}}{b_{k7}}=\frac{c_{j6}}{c_{j7}}=\frac{c_{j2}}{c_{j3}}=z_{18};~
~~~~~~\frac{b_{k5}}{pb_{k8}}=\frac{-c_{j5}}{pc_{j8}}=z_{21};\\
\nonumber&\frac{-b_{k7}}{b_{k4}}=\frac{b_{k3}}{pb_{k8}}=\frac{c_{j7}}{c_{j4}}=\frac{-c_{j3}}{pc_{j8}}=z_{19};~
\frac{b_{k3}}{b_{k5}}=\frac{c_{j3}}{c_{j5}}=z_{20};\\
\label{bc3}&b_{k2}c_{j2}+b_{k3}c_{j3}-pb_{k5}c_{j5}+pb_{k8}c_{j8}=0.
\end{align}

 Let  ${\bf{a}}_k=$[$a_{k1}$ $a_{k2}$ $a_{k3}$ $a_{k4}$ $a_{k5}$ $a_{k6}$ $a_{k7}$ $ a_{k8}$], ${\bf b}_k=[b_{k1}$ $b_{k2}$ $b_{k3}$ $b_{k4}$ $b_{k5}$ $b_{k6}$ $b_{k7}$ $ b_{k8}$] and $\bf{c}_k=$[$c_{k1}$ $c_{k2}$ $c_{k3}$ $c_{k4}$ $c_{k5}$ $c_{k6}$ $c_{k7}$ $c_{k8}$] for $k\in\{2,3\}$. Let $\bf{a}_1=$[1 0 0 0 0 0 0 0] (according to notation for $F_{1}$ in ${\bf{a}}_k$). Similarly, let $\bf{b}_1=$[1 0 0 0 0 0 0 0] and $\bf{c}_1=$[1 0 0 0 0 0 0 0].
 To construct the weight matrices, it is enough to find linearly independent ${\bf{a}}_k$s, linearly independent ${\bf{b}}_k$s and linearly independent ${\bf{c}}_k$s ($k=1,2,3$), which satisfy conditions (\ref{aci2}) to (\ref{ab1}). Since, any combination of $({\bf{a}}^{(i)},~{\bf{c}}^{(j)},~{\bf{b}}^{(k)}, ~i,~j,~k\in\{1,2\})$ has to satisfy above conditions, $z_{i}~(3\leq i\leq 20,~i\not=9)$ are forced to be constants.

  Now, we find the possibilities under which solution exists for ${\bf{a}}_i,~{\bf{c}}_j,~i,j\in\{2,3\} $.

   From the above conditions assuming all coefficients are non zero, we can write ${\bf{a}}_i$ and ${\bf{c}}_j$ using ${{a}}_{i2}$ and ${{c}}_{j2}$ as (for $i,j\in\{2,3\} $)

{\small
\begin{equation}\label{ao1}
\left.
\begin{aligned}
&a_{i1}=\frac{a_{i2}z_{i1}}{z_{4}};&&a_{i2}=a_{i2};&&a_{i3}=\frac{a_{i2}}{z_{5}};&&a_{i4}=\frac{-a_{i2}}{z_{i1}z_{3}};\\
&a_{i5}=\frac{a_{i2}}{z_{4}};&&a_{i6}=\frac{-a_{i2}}{z_{i1}};&&a_{i7}=\frac{-a_{i2}}{z_{i1}z_{5}};&&pa_{i8}=\frac{a_{i2}}{z_{3}};\\
&c_{j1}=\frac{-c_{j2}z_{j2}}{z_{4}};&&c_{j2}=c_{j2};&&c_{j3}=\frac{c_{j2}}{z_{5}};&&c_{j4}=\frac{-c_{j2}}{z_{j2}z_{3}};\\
&c_{j5}=\frac{-c_{j2}}{z_{4}};&&c_{j6}=\frac{c_{j2}}{z_{j2}};&&c_{j7}=\frac{-c_{j2}}{z_{3}z_{5}};
&&pc_{j8}=\frac{c_{j2}}{z_{3}}
\end{aligned}
\right\}
\end{equation}
}

\begin{equation}\label{x4}
\mbox{and}~~z_{4}=\frac{a_{i2}z_{i1}}{a_{i1}}=\frac{-c_{j2}z_{j2}}{c_{j1}}.
\end{equation}

From (\ref{aci4}), (\ref{ao1}) and (\ref{x4})
\begin{align}
\nonumber a_{i2}c_{j2}\left(1+\frac{1}{z_{5}^{2}} -\frac{p}{z_{3}^{2}}\right)&=\frac{-pa_{i1}c_{j1}}{z_{i1}z_{j2}}.\\
\label{aci6}\text{So,}~~~~~~~ 1+\frac{1}{z_{5}^{2}} -\frac{p}{z_{3}^{2}}&=\frac{-pa_{i1}c_{j1}}{z_{i1}z_{j2}a_{i2}c_{j2}}
=\frac{pc_{j1}^{2}}{c_{j2}^{2}z_{j2}^{2}}=\frac{pa_{i1}^{2}}{a_{i2}^{2}z_{i1}^{2}}.
\end{align}
From (\ref{aci1}) and (\ref{ao1}) to (\ref{aci6})
{\small\begin{align}
\nonumber 1&=\left(1+\frac{p}{z_{i1}^{2}}\right)(a_{i1}^{2}+a_{i2}^{2}+a_{i3}^{2}-pa_{i8}^{2})\\
\nonumber&=\left(1+\frac{p}{z_{i1}^{2}}\right)\left(a_{i1}^{2}+a_{i2}^{2}\left(1+\frac{1}{z_{5}^{2}} -\frac{p}{z_{3}^{2}}\right)\right)
=\left(1+\frac{p}{z_{i1}^{2}}\right)^{2}a_{i1}^{2}\\
\label{aci7}&\mbox{and similarly}~\left(1+\frac{p}{z_{j2}^{2}}\right)^{2}c_{j1}^{2}=1~ .
\end{align}}

For a given $p$, by choosing the values of $z_{i1}, ~z_{j2}, ~z_{3}, ~z_{5}, ~a_{i1}, ~a_{i2}, ~c_{j1}, ~c_{j2}$ we can get ${\bf a}_i$ and ${\bf c}_j$. For example, for $p=1$, let $z_{3}=\frac{i}{2}$, $z_{4}=\frac{-1}{3}$, $z_{5}=\frac{1}{2}$, $c_{2}=-a_{2}=\frac{1}{6}$. Now, for $z_{21}=1=z_{22}$ and $a_{21}=c_{21}=\frac{1}{2}$ we get
\begin{align*}
{\bf a}_{2}&=\left[\frac{1}{2}~\frac{-1}{6}~\frac{-1}{3}~\frac{-i}{3}~\frac{1}{2}~\frac{1}{6}~\frac{1}{3}~\frac{i}{3}\right];\\
{\bf c}_{2}&=\left[\frac{1}{2}~\frac{1}{6}~\frac{1}{3}~\frac{i}{3}~\frac{1}{2}~\frac{1}{6}~\frac{1}{3}~\frac{-i}{3}\right],
\end{align*}
and, for $z_{31}=-1=z_{32}$ and $a_{31}=c_{31}=\frac{-1}{2}$, we get
\begin{align*}
{\bf a}_{3}&=\left[\frac{-1}{2}~\frac{-1}{6}~\frac{-1}{3}~\frac{i}{3}~\frac{1}{2}~\frac{-1}{6}~\frac{-1}{3}~\frac{i}{3}\right];\\
{\bf c}_{3}&=\left[\frac{-1}{2}~\frac{1}{6}~\frac{1}{3}~\frac{-i}{3}~\frac{1}{2}~\frac{-1}{6}~\frac{-1}{3}~\frac{-i}{3}\right].
\end{align*}

Now we consider the cases where some of the coefficients $a_{ik}$s or $c_{jk}$s are zero. For these cases we use (\ref{aci1}) to (\ref{aci7}).

\textbf{Case 1}: Let $a_{21}=0$. Then, $a_{25}=0$ or $a_{22}=a_{23}=a_{28}=0$.

Let $a_{22}=a_{23}=a_{28}=0$ and $a_{25}\not=0$. Then, we get $c_{j2}=c_{j3}=c_{j8}=c_{j5}=0,~j=2,3$. So, $c_{j1}c_{j4}=c_{j1}c_{j6}=c_{j1}c_{j7}=0$. If $c_{j1}\not=0$, $A_{2j}=\pm F_{2}$ is clearly not a solution. This implies $c_{1}=0$. From (\ref{aci4}), we have (for $i,j\in\{2,3\}$)
\begin{equation*}
\frac{a_{i4}}{a_{i6}}=\frac{-c_{j4}}{c_{j6}}=y_{1};~~
\frac{a_{i4}}{a_{i7}}=\frac{-c_{j4}}{c_{j7}}=y_{2};~~
\frac{a_{i6}}{a_{i7}}=\frac{c_{j6}}{c_{j7}}=y_{3},
\end{equation*}
for some constants $y_{1},~y_{2},~y_{3}.$  From these equations, ${\bf c}_{2}$ and ${\bf c}_{3}$ are linearly dependent. So, assuming $a_{22}=a_{23}=a_{28}=0$ and $a_{25}\not=0$ is not valid.

Now, let $a_{22}=a_{23}=a_{28}=0$ and $a_{25}=0$. Consider $z_4,z_7,z_8,z_{11},z_{14},z_{15},z_{17},z_{20},z_{21}$.
Let $a_{35}\not=0$. Now, if  $a_{32}\not=0$, then $c_{j2}=c_{j5}=0$ or $b_{j2}=b_{j5}=0$, $j=2,3$. With out loss of generality assume $c_{j2}=c_{j5}=0$, $j=2,3$. Then, $c_{j3}=c_{j8}=0$, $j=2,3$. Then $c_j$s, $j=1,2,3$ cannot be linearly independent. This happens even if $a_{33}\not=0$ or $a_{38}\not=0$. So $a_{32}=a_{33}=a_{38}=0$, when  $a_{35}\not=0$. Then, again $c_{j2}=c_{j3}=c_{j5}=c_{j8}=0$, $j=2,3$, which is not valid. So $a_{35}=0$. If $a_{32}=a_{33}=a_{38}=0$, then for ${\bf a}_{1}$, ${\bf a}_{2}$ and ${\bf a}_{3}$ to be linearly independent, we need to have $c_{j2}=c_{j3}=c_{j4}=c_{j6}=c_{j7}=c_{j8}=0$, $j=2,3$. In that case, ${\bf c}_{1}$, ${\bf c}_{2}$ and ${\bf c}_{3}$ cannot satisfy linear independence conditions (because only $c_{j1}\not=0$ and $c_{j5}\not=0$). So, with out loss of generality let $a_{32}\not=0$. Then $c_{j5}=0$,
$j=2,3$. Since $a_{35}=0$, $a_{31}=0$ or $a_{34}=a_{36}=a_{37}=0$. Let $a_{34}=a_{36}=a_{37}=0$. Since $c_{j5}=0$, $c_{j1}=0$ or $c_{j4}=c_{j6}=c_{j7}=0$, $j=2,3$. Let $c_{j4}=c_{j6}=c_{j7}=0$, $j=2,3$. Then, for $c_{j}$s, $j=1,2,3$ to be linearly independent $a_{24}=a_{26}=a_{27}=0$, which is not valid. So, let $c_{24}=c_{26}=c_{27}=0$ and $c_{31}=0$. From \ref{aci5}, if $a_{32}\not=0$, $a_{33}\not=0$ and $a_{38}\not=0$, then, $c_{22}=c_{23}=c_{28}=0$ which is not possible. If $a_{38}=0$, then $c_{34}=0$ , then
$a_{24}=0$, then $c_{28}=c_{38}=0$, which gives $c_{22}^{2}+c_{23}^{2}=0$. So $c_{22}=c_{23}=0$, because $c_{22},c_{23}\in \mathbb{R}$. So, let $c_{j1}=0$, $j=2,3$. Now, as in previous assumption $c_{j2}=c_{j3}=c_{j8}=0$, $j=2,3$. Now, only $c_{j4},c_{j6},c_{j7}$ are possibly non-zero. So, we cannot get
$a_{i}$s and $c_{i}$s satisfying linear independence conditions. So, consider $a_{31}=0$. Again, since $c_{j5}=0$, $c_{j1}=0$ or $c_{j4}=c_{j6}=c_{j7}=0$, $j=2,3$. If $c_{j4}=c_{j6}=c_{j7}=0$, $j=2,3$, we cannot get $a_{i}$s and $c_{i}$s satisfying linear independence conditions. So, let $c_{24}=c_{26}=c_{27}=0$ and $c_{31}=0$.  From \ref{aci5}, if $a_{32}\not=0$, $a_{33}\not=0$ and $a_{38}\not=0$, then, $c_{22}=c_{23}=c_{28}=0$ which is not possible.  If $a_{38}=0$, since $a_{32}\not=0$ then $c_{34}=0$ , then
$a_{24}=a_{34}=0$, then $c_{28}=c_{38}=0$. If $a_{32}\not=0$ and $a_{33}\not=0$, then $c_{22}=c_{23}=0$, so not a solution. So $a_{33}=0$. Then, $c_{22}=c_{32}=0$, $c_{37}=0$, $a_{37}=0$, $a_{27}=0$ and $c_{23}=c_{33}=0$, which is not a solution. So, let $c_{j1}=0$, $j=2,3$. Now, as in the previous assumption $a_{32}\not=0$, $a_{33}\not=0$ and $a_{38}\not=0$ is not possible and $a_{32}\not=0$, $a_{33}\not=0$ and $a_{38}=0$ is also not possible. So, let $a_{32}\not=0$, $a_{33}=0$ and $a_{38}=0$. Then $c_{j4}=0$, $a_{j4}=0$ and $c_{j8}=0$, $j=2,3$. And, $c_{j2}=0$, $c_{j7}=0$, $a_{j7}=0$ and $c_{j3}c_{j6}=0$, $j=2,3$. So, this is not a valid option.

Now, let $a_{25}=0$. Then, we get $a_{22}c_{j5}=a_{23}c_{j5}=a_{28}c_{j5}=0$, $j=2,3$. Since, for $a_{22}=a_{23}=a_{28}=a_{25}=0$, (\ref{aci1}) to (\ref{aci7}) do not have a solution, $c_{j5} =0$, $j=2,3$. Now, $c_{j1}=0$ or $c_{j4}=c_{j6}=c_{j7}=0$ has to be satisfied. Let $c_{j4}=c_{j6}=c_{j7}=0$, $j=2,3$. Then, we have (for $i,j\in\{2,3\}$)
\begin{equation*}
\frac{c_{j2}}{pc_{j8}}=\frac{a_{i6}}{a_{i4}}=y_4;~~
\frac{c_{j3}}{pc_{j8}}=\frac{a_{i7}}{a_{i4}}=y_5;~~
\frac{c_{j2}}{c_{j3}}=\frac{a_{i6}}{a_{i7}}=y_6;
\end{equation*}
\begin{equation}\label{innerprod}
a_{i2}c_{j2}+a_{i3}c_{j3}-pa_{i8}c_{j8}=0,
\end{equation}
for some constants $y_{4},~y_{5},~y_{6}.$ For ${\bf c}_{j}$s to be linearly independent, we need $a_{i4}=a_{i6}=a_{i7}=0=a_{35}$. Since ${\bf c}_{j}$s and ${\bf a}_{i}$s have to satisfy linearly independence conditions, we need 4 vectors [$a_{22}$ $a_{23}$ $ a_{28}$], [$a_{32}$ $a_{33}$ $a_{38}$], [$c_{22}$ $c_{23}$ $ c_{28}$], [$c_{32}$ $c_{33}$ $ c_{38}$] such that $a_{i2}c_{j2}+a_{i3}c_{j3}-pa_{i8}c_{j8}=0$ ($i,j\in\{2,3\}$).

Since, $a_{i4}=a_{i5}=a_{i6}=a_{i7}=0=c_{j4}=c_{j5}=c_{j6}=c_{j7} $ ($i,j\in\{2,3\}$) and $A_{1i}s, A_{2j}s$  ($i,j\in\{1,2,3\}$) (from \ref{b_rep}) are to be linearly independent, [$a_{22}$ $a_{23}$ $ a_{28}$], [$a_{32}$ $a_{33}$ $a_{38}$], [$c_{22}$ $c_{23}$ $ c_{28}$] and [$c_{32}$ $c_{33}$ $ c_{38}$] have to be linearly independent over $\mathbb{R}$, which is not possible. Hence, $c_{j4}=c_{j6}=c_{j7}=0$, $j=2,3$ is not a valid option.

Now, let $c_{24}=c_{26}=c_{27}=0$ and $c_{31}=0$. No solution if $a_{i2}=a_{i3}=a_{i8}=0$ for some $i\in\{2,3\}$. Let $a_{i2}\not=0$, $a_{i3}\not=0$ and $a_{i8}\not=0$, $i\in\{2,3\}$. Then $c_{34}\not=0$, $c_{36}\not=0$ and $c_{37}\not=0$. If $a_{i4}=a_{i6}=a_{i7}=0$ for both $i\in\{2,3\}$, $a_i$s won't be linearly independent. So, let $a_{k4}\not=0$, $a_{k6}\not=0$ and $a_{k7}\not=0$ for some $k\in\{2,3\}$. Then from (\ref{innerprod}) $c_{j2}=c_{j3}=c_{j8}=0$, $j=2,3$, so not valid. Now, let $a_{28}=0$. Then $c_{34}=0$, $a_{i4}=0$, $c_{i8}=0$, $i\in\{2,3\}$ and $a_{38}=0$. If $a_{22}\not=0$ and $a_{23}\not=0$, $c_{j2}=c_{j3}=0$, $j=2,3$, so not a solution. So, let $a_{23}=0$. Then $c_{37}=0$, $a_{i7}=0$ and $c_{j2}=c_{j3}=0$, $j=2,3$, so not a solution.

Now, let $c_{21}=c_{31}=0$. By proceeding as in above assumption, we get $c_{i4}=0$, $a_{i4}=0$, $c_{i8}=0$, $i\in\{2,3\}$ and $a_{38}=0$, when $a_{28}=0$. If $a_{22}\not=0$ and $a_{23}\not=0$, $c_{j2}=c_{j3}=0$, $j=2,3$, so not a solution. So, let $a_{23}=0$. Then $c_{i7}=0$, $a_{i7}=0$ and $c_{j2}=c_{j3}=0$, $j=2,3$, so $a_{21}=a_{31}=0$ not valid.

\textbf{Case 2}: Let $a_{25}=0$. Then, $a_{21}=0$ or $a_{24}=a_{26}=a_{27}=0$. Since $a_{21}=0$ is not possible, $a_{24}=a_{25}=a_{26}=a_{27}=0$. Now, we get $a_{23}c_{j5}=a_{28}c_{j5}=a_{22}c_{j5}=0$, $j\in\{2,3\}$. Since $a_{22}=a_{23}=a_{24}=a_{25}=a_{26}=a_{27}=a_{28}=0$ is not a valid solution, $c_{j5}=0$, $j\in\{2,3\}$. Then, $c_{j1}=0$ or $c_{j4}=c_{j6}=c_{j7}=0$. Since $c_{j1}=0$ is not valid (from Case 1, because of similarity between ${\bf a}_i$ and ${\bf c}_j$), $c_{j4}=c_{j5}=c_{j6}=c_{j7}=0$. Similarly, since $c_{j5}=0$, $a_{35}=0$ and $a_{34}=a_{36}=a_{37}=0$. So, from Case 1, we cannot get $a_{i}$s and $c_{i}$s satisfying linear independence conditions. So, $a_{25}=0$ is not possible.

\textbf{Case 3}: Let $a_{22}=0$. Then, $a_{26}=0$ or $a_{21}=a_{23}=a_{28}=0$. From Case 1, $a_{21}=0$ is not valid. So, $a_{26}=0$. Now, $c_{j6}a_{28}=c_{j6}a_{23}=c_{j2}a_{25}=0$, $j\in\{2,3\}$. Since $a_{25}=0$ is not valid, $c_{j2}=0$ ($j\in\{2,3\}$). Since $c_{j1}=c_{j3}=c_{j8}=0$ is not valid, $c_{j6}=0$ ($j\in\{2,3\}$). Now, $a_{i2}=a_{i6}=c_{j2}=c_{j6}=0$, $i,j\in\{2,3\}$. If we denote ${\bf a}_i$ and ${\bf c}_j$ using $a_{i3}$ and $c_{j3}$, instead of $a_{i2}$ and $c_{j2}$, and follow the lines of (\ref{aci6}) and (\ref{aci7}), we get ${\bf a}_{i}$s and ${\bf c}_{j}$s, satisfying the linear independence conditions.

\textbf{Case 4}: Let $a_{23}=0$. Then, $a_{27}=0$ or $a_{21}=a_{22}=a_{28}=0$. From Case 1, $a_{21}=0$ is not valid. So, $a_{27}=0$. Then, we get $a_{22}c_{j7}=a_{28}c_{j7}=a_{25}c_{j3}=a_{24}c_{j3}=a_{24}c_{j7}=a_{26}c_{j3}=a_{26}c_{j7}=0$, $j\in\{2,3\}$.
 If $c_{j3}\not=0$, then $a_{25}=a_{24}=a_{26}=a_{27}=0,$ and this is not possible from Case 2. So, $c_{j3}=0$. Since $c_{j1}=c_{j2}=c_{j8}=0$ is not valid, $c_{j7}=0$ ($j\in\{2,3\}$). Then, $a_{i3}=a_{i7}=c_{j3}=c_{j7}=0$. By following the lines of (\ref{aci6}) and (\ref{aci7}), we get ${\bf a}_{i}$s and ${\bf c}_{j}$s, satisfying the linear independence conditions.

\textbf{Case 5}: Let $a_{24}=0$. Then, $a_{28}=0$, because $a_{24}=a_{25}=a_{26}=a_{27}=0$ is not valid from Case 2. Now, we get $a_{22}c_{j4}=a_{23}c_{j4}=a_{26}c_{j8}=a_{27}c_{j8}=a_{26}c_{j4}=a_{27}c_{j4}=a_{25}c_{j8}=0$, $j\in\{2,3\}$.
  If $c_{j8}\not=0$, $a_{25}=a_{26}=a_{27}=a_{24}=a_{28}=0$, and this is not possible from Case 2. So, $c_{j8}=0$. Since $c_{j1}=c_{j2}=c_{j3}=0$ is not valid, $c_{j4}=0$ ($j\in\{2,3\}$).
Now, $a_{i4}=a_{i8}=c_{j4}=c_{j8}=0$. By following the lines of (\ref{aci6}) and (\ref{aci7}), we get ${\bf a}_{i}$s and ${\bf c}_{j}$s, satisfying the linear independence conditions.

\textbf{Case 6} : Let $a_{26}=0$. Since, $a_{24}=a_{25}=a_{26}=a_{27}=0$ is not valid from Case 2, $a_{22}=0$. So, Case 3 follows.

\textbf{Case 7} : Let $a_{27}=0$. Since, $a_{24}=a_{25}=a_{26}=a_{27}=0$ is not valid from Case 2, $a_{23}=0$. So Case 4 follows.

\textbf{Case 8} : Let $a_{28}=0$. Then, $a_{24}=0$ or $a_{21}=a_{22}=a_{23}=0$. From Case 1, $a_{21}=a_{23}=a_{28}=a_{22}=0$ is not valid. So, $a_{24}=0$ and Case 5 follows.

From the above cases, it follows that  ${\bf a}_i$s, and ${\bf c}_i$s ($i\in\{2,3\}$), satisfying the linear independence conditions are possible only if
\begin{enumerate}
  \item None of the coefficients in ${\bf a}_i$ or ${\bf c}_i$ are 0.
  \item $a_{i2}=a_{i6}=c_{i2}=c_{i6}=0$ and other coefficients are non-zero.
  \item $a_{i3}=a_{i7}=c_{i3}=c_{i7}=0$ and other coefficients are non-zero.
  \item $a_{i4}=a_{i8}=c_{i4}=c_{i8}=0$ and other coefficients are non-zero.
  \item $a_{i2}=a_{i6}=c_{i2}=c_{i6}=a_{i3}=a_{i7}=c_{i3}=c_{i7}=0$ and other coefficients are non-zero.
  \item $a_{i2}=a_{i6}=c_{i2}=c_{i6}=a_{i4}=a_{i8}=c_{i4}=c_{i8}=0$ and other coefficients are non-zero.
  \item $a_{i3}=a_{i7}=c_{i3}=c_{i7}=a_{i4}=a_{i8}=c_{i4}=c_{i8}=0$ and other coefficients are non-zero.
\end{enumerate}

Here, $a_{i2}=a_{i6}=c_{i2}=c_{i6}=a_{i3}=a_{i7}=c_{i3}=c_{i7}=a_{i4}=a_{i8}=c_{i4}=c_{i8}=0$ and other coefficients are non-zero, possibility is not taken into account, because, we are left with only $a_{i1},~a_{i5},~c_{i1},~c_{i5}$ as non-zeros. From which we cannot get ${\bf a}_i$s, and ${\bf c}_i$s ($i\in\{2,3\}$) satisfying the linear independence conditions.

\textbf{Step 4}:

As in the case of solving for (${\bf a}_i$ and ${\bf c}_j$), for solving (${\bf a}_i$ and ${\bf b}_k$) or (${\bf b}_k$ and ${\bf c}_j$) we get similar conditions on their coefficients. Therefore, we may be able to get solutions to ${\bf a}_i$s, ${\bf b}_k$s and ${\bf c}_j$s, which satisfy the linear independence conditions, if they satisfy one of the following conditions: (for $i\in\{2,3\}$)
\begin{enumerate}
  \item None of the coefficients in ${\bf a}_i$ or ${\bf b}_i$ or ${\bf c}_i$ are 0
  \item $a_{i2}=a_{i6}=b_{i2}=b_{i6}=c_{i2}=c_{i6}=0$ and other coefficients are non-zero.
  \item $a_{i3}=a_{i7}=b_{i3}=b_{i7}=c_{i3}=c_{i7}=0$ and other coefficients are non-zero.
  \item $a_{i4}=a_{i8}=b_{i4}=b_{i8}=c_{i4}=c_{i8}=0$ and other coefficients are non-zero.
  \item $a_{i2}=a_{i6}=b_{i2}=b_{i6}=c_{i2}=c_{i6}=a_{i3}=a_{i7}=b_{i3}=b_{i7}=c_{i3}=c_{i7}=0$ and other coefficients are non-zero.
  \item $a_{i2}=a_{i6}=b_{i2}=b_{i6}=c_{i2}=c_{i6}=a_{i4}=a_{i8}=b_{i4}=b_{i8}=c_{i4}=c_{i8}=0$ and other coefficients are non-zero.
  \item $a_{i3}=a_{i7}=b_{i3}=b_{i7}=c_{i3}=c_{i7}=a_{i4}=a_{i8}=b_{i4}=b_{i8}=c_{i4}=c_{i8}=0$ and other coefficients are non-zero.
\end{enumerate}

\textbf{Step 5}:

From (\ref{aci5}), (\ref{ab3}) and (\ref{bc3}), since, $a_{i5}$, $b_{k5}$, $c_{j5}$ cannot be zero, we have (for $i,j,k\in\{2,3\}$)
\begin{align*}
&a_{i2}=a_{i6}=b_{k2}=b_{k6}=c_{j2}=c_{j6}=0 ~~\text{from}~~z_{4},z_{11},z_{17},\\
&a_{i3}=a_{i7}=b_{k3}=b_{k7}=c_{j3}=c_{j7}=0 ~~\text{from}~~z_{7},z_{14},z_{20},\\
&a_{i4}=a_{i8}=b_{k4}=b_{k8}=c_{j4}=c_{j8}=0 ~~\text{from}~~z_{8},z_{15},z_{21}.
\end{align*}

Now, we are left with only $a_{i1}$, $a_{i5}$, $b_{k1}$, $b_{k5}$, $c_{j1}$, $c_{j5},$  which are non-zero, from which we cannot get ${\bf a}_i$s,  ${\bf b}_k$s and ${\bf c}_j$s which satisfy the linear independence conditions. Hence, $g= 2a-1$ is not possible.

From Corollary \ref{corcode}, since we are able to generate $g=2a-2$ group 3-real symbol decodable UWDs, the maximum achievable rate is $\frac{3(2a-2)}{2^{a+1}}=\frac{3(a-1)}{2^{a}}$ cspcu. Since this rate is achievable, this upper bound is tight. This completes the proof.

\section*{Appendix C}
\begin{center}
{\bf Proof of Theorem \ref{thm4}}
\end{center}
\begin{proof}
Let $g$ be the number of groups. Then, the rate is given by $\frac{4g}{2^{a+1}}=\frac{g}{2^{a-1}}$ cspcu. From Theorem \ref{thm3sym2}, the maximum number of groups possible for 3-real symbol decodable UWDs is $2a-2$. So, for 4-real symbol decodable UWDs the maximum number of groups possible is $\le 2a-2$, but in Theorem \ref{thmcode}, we constructed $2a-2$ group 4-real symbol decodable UWDs. So, the maximum number of groups possible (achievable) for 4-real symbol decodable UWDs is $2a-2$. Hence, the tight upper bound on the rate of 4-real symbol decodable $2^{a}\times 2^{a}$ ($a\ge2$) UWD is $\frac{a-1}{2^{a-2}}$ cspcu.
\end{proof}


\section*{Appendix D}
\begin{center}
{\bf Proof of Theorem \ref{div_thm}}
\end{center}

Let $S$ and ${S}'$ be two distinct codewords of the code as in Theorem \ref{thmcode}. Let
{\small \begin{align*}
    {S}=\sum_{i=0}^{2a-3}\sum_{j=1}^{4}x_{ij}{A}_{ij},~~~
    {S}'=\sum_{i=0}^{2a-3}\sum_{j=1}^{4}x_{ij}'{A}_{ij},
  \end{align*}}
where $A_{01}=I_{n}$. For $0\leq i \leq 2a-3$ and $1\leq j_1,j_2 \leq 4$,
\begin{equation}\label{dcg1}
A_{i{j_1}}^{H}A_{i{j_2}}+A_{i{j_2}}^{H}A_{i{j_1}}=2A_{0{j_1}}A_{0{j_2}}.
\end{equation}

Let $\bigtriangleup {S}\triangleq {S}-{S}'$, $(\bigtriangleup {\bf x}_{i}{\bf x}_{i}')_{j}\triangleq x_{ij}-x_{ij}'$ and $\bigtriangleup {\bf x}_{i}{\bf x}_{i}'\triangleq [(\bigtriangleup {\bf x}_{i}{\bf x}_{i}')_{1}~(\bigtriangleup {\bf x}_{i}{\bf x}_{i}')_{2}~(\bigtriangleup {\bf x}_{i}{\bf x}_{i}')_{3}~(\bigtriangleup {\bf x}_{i}{\bf x}_{i}')_{4}]^{T}$. Then, $(\bigtriangleup {S})^{H}(\bigtriangleup {S})$ is given by
{\small\begin{align}
\nonumber
&\left[\sum_{i=0}^{2a-3}\sum_{j=1}^{4}(\bigtriangleup {\bf x}_{i}{\bf x}_{i}')_{j}{A}_{ij}\right]^{H}
\left[\sum_{m=0}^{2a-3}\sum_{j=1}^{4}(\bigtriangleup {\bf x}_{i}{\bf x}_{i}')_{j}{A}_{mj}\right]\\
\label{shsdet}=&\sum_{i=0}^{2a-3}\left(\sum_{j=1}^{4}(\bigtriangleup {\bf x}_{i}{\bf x}_{i}')_{j}{A}_{ij}\right)^{H}\left(\sum_{j=1}^{4}(\bigtriangleup {\bf x}_{i}{\bf x}_{i}')_{j}{A}_{ij}\right)\\
\nonumber=&\sum_{i=0}^{2a-3}[((\bigtriangleup {\bf x}_{i}{\bf x}_{i}')_{1}^{2}+(\bigtriangleup {\bf x}_{i}{\bf x}_{i}')_{2}^{2}+(\bigtriangleup {\bf x}_{i}{\bf x}_{i}')_{3}^{2}+(\bigtriangleup {\bf x}_{i}{\bf x}_{i}')_{4}^{2})I_{n}+\\
\nonumber &2(\bigtriangleup {\bf x}_{i}{\bf x}_{i}')_{1}(\bigtriangleup {\bf x}_{i}{\bf x}_{i}')_{2}A_{02}+2(\bigtriangleup {\bf x}_{i}{\bf x}_{i}')_{1}(\bigtriangleup {\bf x}_{i}{\bf x}_{i}')_{3}A_{03}+\\
\nonumber &2(\bigtriangleup {\bf x}_{i}{\bf x}_{i}')_{1}(\bigtriangleup {\bf x}_{i}{\bf x}_{i}')_{4}A_{04}
+2(\bigtriangleup {\bf x}_{i}{\bf x}_{i}')_{2}(\bigtriangleup {\bf x}_{i}{\bf x}_{i}')_{3}A_{02}A_{03}+\\
\label{shsdet1}&2(\bigtriangleup {\bf x}_{i}{\bf x}_{i}')_{2}(\bigtriangleup {\bf x}_{i}{\bf x}_{i}')_{4}A_{02}A_{04}+2(\bigtriangleup {\bf x}_{i}{\bf x}_{i}')_{3}(\bigtriangleup {\bf x}_{i}{\bf x}_{i}')_{4}A_{03}A_{04}],
\end{align}}
where, (\ref{shsdet}) and (\ref{shsdet1}) occurs because of (\ref{nasc}), (\ref{nasc1}) and (\ref{dcg1}).

Now, we calculate the determinant of $(\bigtriangleup {S})^{H}(\bigtriangleup {S})$ using $A_{02},~A_{03},~A_{04}$ as given in Theorem \ref{thmcode}.
{\small\begin{align}
\nonumber &det[(\bigtriangleup {S})^{H}\bigtriangleup {S}]
=\\ \nonumber&\left[\sum_{i=0}^{2a-3}((\bigtriangleup {\bf x}_{i}{\bf x}_{i}')_{1}+(\bigtriangleup {\bf x}_{i}{\bf x}_{i}')_{2}+(\bigtriangleup {\bf x}_{i}{\bf x}_{i}')_{3}+(\bigtriangleup {\bf x}_{i}{\bf x}_{i}')_{4})^{2}\right]^{\frac{n}{4}}\\
\nonumber&\left[\sum_{i=0}^{2a-3}((\bigtriangleup {\bf x}_{i}{\bf x}_{i}')_{1}-(\bigtriangleup {\bf x}_{i}{\bf x}_{i}')_{2}+(\bigtriangleup {\bf x}_{i}{\bf x}_{i}')_{3}-(\bigtriangleup {\bf x}_{i}{\bf x}_{i}')_{4})^{2}\right]^{\frac{n}{4}}\\
\nonumber&\left[\sum_{i=0}^{2a-3}((\bigtriangleup {\bf x}_{i}{\bf x}_{i}')_{1}+(\bigtriangleup {\bf x}_{i}{\bf x}_{i}')_{2}-(\bigtriangleup {\bf x}_{i}{\bf x}_{i}')_{3}-(\bigtriangleup {\bf x}_{i}{\bf x}_{i}')_{4})^{2}\right]^{\frac{n}{4}}\\
\label{divdet}&\left[\sum_{i=0}^{2a-3}((\bigtriangleup {\bf x}_{i}{\bf x}_{i}')_{1}-(\bigtriangleup {\bf x}_{i}{\bf x}_{i}')_{2}-(\bigtriangleup {\bf x}_{i}{\bf x}_{i}')_{3}+(\bigtriangleup {\bf x}_{i}{\bf x}_{i}')_{4})^{2}\right]^{\frac{n}{4}}
\end{align}}

The minimum of the determinant, denoted by $\bigtriangleup_{min},$ of $(\bigtriangleup {S})^{H}(\bigtriangleup {S})$ for all possible non-zero $\bigtriangleup {S}$ is given as
\begin{equation}
\label{divdet1}\bigtriangleup_{min}=\min_{\bigtriangleup {S}\not= 0} det[(\bigtriangleup {S})^{H}(\bigtriangleup {S})].
\end{equation}
Since the expression in the right hand side of equation (\ref{divdet}) is a product of sum of squares of real numbers, we can write (\ref{divdet}), (\ref{divdet1}) as

{\small
\begin{align*}
\nonumber &det[(\bigtriangleup {S})^{H}(\bigtriangleup {S})]
\geq\\ \nonumber&\left[((\bigtriangleup {\bf x}_{i}{\bf x}_{i}')_{1}+(\bigtriangleup {\bf x}_{i}{\bf x}_{i}')_{2}+(\bigtriangleup {\bf x}_{i}{\bf x}_{i}')_{3}+(\bigtriangleup {\bf x}_{i}{\bf x}_{i}')_{4})^{2}\right]^{\frac{n}{4}}\\
&\left[((\bigtriangleup {\bf x}_{i}{\bf x}_{i}')_{1}-(\bigtriangleup {\bf x}_{i}{\bf x}_{i}')_{2}+(\bigtriangleup {\bf x}_{i}{\bf x}_{i}')_{3}-(\bigtriangleup {\bf x}_{i}{\bf x}_{i}')_{4})^{2}\right]^{\frac{n}{4}}\\
&\left[((\bigtriangleup {\bf x}_{i}{\bf x}_{i}')_{1}+(\bigtriangleup {\bf x}_{i}{\bf x}_{i}')_{2}-(\bigtriangleup {\bf x}_{i}{\bf x}_{i}')_{3}-(\bigtriangleup {\bf x}_{i}{\bf x}_{i}')_{4})^{2}\right]^{\frac{n}{4}}\\
&\left[((\bigtriangleup {\bf x}_{i}{\bf x}_{i}')_{1}-(\bigtriangleup {\bf x}_{i}{\bf x}_{i}')_{2}-(\bigtriangleup {\bf x}_{i}{\bf x}_{i}')_{3}+(\bigtriangleup {\bf x}_{i}{\bf x}_{i}')_{4})^{2}\right]^{\frac{n}{4}}
\end{align*}}
for some $0\leq i \leq 2a-3$. And, $\bigtriangleup_{min}$ occurs when all but one among $\bigtriangleup {\bf x}_{i}{\bf x}_{i}',~0\leq i \leq 2a-3$ are zeros.

Therefore,
$ \bigtriangleup_{min}=\min_{\bigtriangleup {\bf x}_{i}{\bf x}_{i}' \not= 0} (\Psi(\bigtriangleup {\bf x}_{i}{\bf x}_{i}'))^{n},$
for $ \bigtriangleup {\bf x}_{i}{\bf x}_{i}'\in \bigtriangleup \cal{B}$ (i.e. ${\bf x}_{i}\in \cal{B}$) $\forall~ 0\leq i \leq 2a-3$.
 To achieve full diversity we need $\bigtriangleup_{min}>0$, which can be guaranteed if  (\ref{signal_condn}) is satisfied. This completes the proof.

\end{document}